\newif\ifpublic
\newcommand{\phnote}[1]{\todo[color=red!100!green!33, size=\footnotesize]{ph: #1}}
\newcommand{\sidnote}[1]{\todo[color=green, size=\footnotesize]{sid: #1}}
\newcommand{\sayantannote}[1]{\todo[color=green,size=\footnotesize]{sc: #1}}
\newcommand*\samethanks[1][\value{footnote}]{\footnotemark[#1]}
\crefname{algocf}{alg.}{algs.}
\Crefname{algocf}{Algorithm}{Algorithms}
\renewcommand{\eqref}[1]{\hyperref[#1]{(\ref*{#1})}}
\pgfplotsset{compat=1.14}
\theoremstyle{plain}
\newtheorem{theorem}{Theorem}[section]
\newtheorem{lemma}[theorem]{Lemma}
\newtheorem{claim}[theorem]{Claim}
\theoremstyle{definition}
\newcommand\blfootnote[1]{%
  \begingroup
  \renewcommand\thefootnote{}\footnote{#1}%
  \addtocounter{footnote}{-1}%
  \endgroup
}
\renewcommand{\epsilon}{\varepsilon}
\renewcommand{\phi}{\varphi}
\newcommand{\poly}{\mathrm{poly}}
\newcommand{\set}[1]{\{#1\}}
\newcommand{\brak}[1]{\lbrace#1\rbrace}
\DeclareMathOperator{\E}{\mathbb{E}}
\newcommand{\warmup}[1]{\textsc{Compress}#1}
\newcommand{\warmuphelper}[1]{\textsc{SpruceUp}#1}
\newcommand{\sampler}[1]{\textsc{contract}#1}
\newcommand{\gd}{\textsc{Glauber Dynamics}}
\newcommand{\out}{\mathsf{f}}
\DeclarePairedDelimiter{\abs}{\lvert}{\rvert}
\title{Improved Bounds for Perfect Sampling of $k$-Colorings in Graphs}
\author{
  Siddharth Bhandari\thanks{Tata Institute of Fundamental Research,
    Mumbai. email: {\tt
      \{siddharth.bhandari,sayantan.chakraborty\}@tifr.res.in}. Research of
    the authors supported by the Department of Atomic Energy,
    Government of India, under project no. 12-R\&D-TFR-5.01-0500. The research of the first author was supported in part by Google PhD Fellowship. }
   \and
 Sayantan Chakraborty\samethanks
}
\begin{document}
\maketitle

\begin{abstract}
 We present a randomized algorithm that takes as input an undirected $n$-vertex graph $G$ with maximum degree $\Delta$ and an integer $k > 3\Delta$, and returns a random proper $k$-coloring of $G$. The 
 distribution of the coloring is \emph{perfectly} uniform over the set of all proper $k$-colorings; the expected running time of the algorithm is $\mathrm{poly}(k,n)=\widetilde{O}(n\Delta^2\cdot \log(k))$.
 This improves upon a result of Huber~(STOC 1998) who obtained a polynomial time perfect sampling algorithm for $k>\Delta^2+2\Delta$.
 Prior to our work, no algorithm with expected running time $\mathrm{poly}(k,n)$ was known to guarantee perfectly sampling with sub-quadratic number of colors in general.
 
 Our algorithm (like several other perfect sampling algorithms including Huber's) is based on  the Coupling from the Past method. Inspired by the \emph{bounding chain} approach, pioneered independently by 
 Huber~(STOC 1998) and
 H\"aggstr\"om \& Nelander~(Scand.{} J.{} Statist., 1999), we employ a novel bounding chain to derive our result for the graph coloring problem. 
 \end{abstract}
\blfootnote{A preliminary version of this paper (\href{https://arxiv.org/abs/1909.10323v1}{arXiv:1909.10323v1}) proved a weaker result that achieves expected polynomial time when $k>2e\Delta^2/\ln(\Delta)$.}

\section{Introduction}\label{sec:Introduction}

A $k$-coloring of a graph is an assignment of colors from the set
$[k]=\{1,2,\ldots,k\}$ to the vertices so that adjacent vertices are assigned different colors. We consider the problem of randomly sampling colorings of a given graph. The input is a graph $G$ 
and an integer $k$: our goal is to generate a $k$-coloring uniformly at random from the set of all $k$-colorings of $G$.
The problem of sampling $k$-colorings has several implications in theoretical computer science and statistical mechanics. For example, Jerrum, Valiant and Vazirani~\cite{JerrumValiantVazirani} show that from an almost uniform sampler for proper $k$-colorings of $G$, one can obtain a
Fully Polynomial Randomized Approximation Scheme (FPRAS) for counting the number of such colorings. In statistical mechanics, sampling proper colorings is central to simulation based studies of phase transitions and correlation decay (see, e.g., the paper of Martinelli and Olivieri~\cite{martinelli1994}). 

The problem is computationally tractable if we are allowed significantly more colors than the maximum degree $\Delta$ of the graph. The more colors we are allowed, the easier it appears to be to produce a random $k$-coloring. Indeed, if $k$ is much smaller than $\Delta$, it is NP-hard to even determine whether a valid $k$-coloring exists~\cite{GAREY1976237}. 
Sampling algorithms, therefore, typically require a lower bound on $k$ in terms of $\Delta$ in order to guarantee efficiency. There has been a steady stream of works that have progressively reduced the lower bound on $k$ in terms of $\Delta$. 

Most works in this line of research focus on producing \textit{approximately} uniform samples, for approximate solutions often suffice in applications. In this setting,
the input to the problem consists of an undirected graph $G$ with $n$ vertices and maximum degree $\Delta$, a number $k$ and a parameter $\epsilon\in (0,1)$ . The goal is to generate a $k$-coloring whose distribution is within $\epsilon$ (in total variation distance) of the uniform distribution on the set of all $k$-colorings of $G$. Let $k_{+}(\Delta)$ be the smallest integer $k^*$ such that for all integers $k > k^*$\phnote{Changed defn of $k_{+}$ slightly}
there is such a sampling algorithm running in expected time $\poly(k,n, \log (1/\epsilon))$ for all $\epsilon >0$ (the subscript $+$ in $k_+$ indicates that we allow some error).
By showing that the Markov chain based on Glauber Dynamics mixes fast 
whenever $k>2\Delta$, Jerrum \cite{Jerrum} established that $k_+(\Delta) \leq 2\Delta$\footnote{Jerrum in~\cite{Jerrum} mentions that the Glauber Dynamics mixes in polynomial time even when $k=2\Delta$ and credits Frieze for this observation: hence, we have $k_+(\Delta)\leq 2\Delta-1$. }. 
Similar results appeared in the statistical physics literature (see Salas and Sokal \cite{Salas1997}); also, the path coupling approach developed by Bubley and Dyer~\cite{bubley-dyer} can be used to provide an alternative justification for Jerrum's result. Subsequent works obtained better upper bounds
on $k_+(\Delta)$. Vigoda~\cite{Vigoda-better} provided a better analysis of
Glauber dynamics (by relating it to a different Markov chain based on \emph{flip dynamics}) and concluded that $k_+(\Delta) \leq \frac{11}{6}\Delta$; recently, Chen, Delcourt, Moitra, Perarnau and Postle~\cite{chen_improved_2019} showed that $k_+(\Delta) \leq (\frac{11}{6}-\delta)\Delta$ (for a positive $\delta \sim 10^{-4}$). 
Even better upper bounds are known for certain special classes of graphs.
For graphs with girth at least $9$, Hayes and Vigoda~\cite{Hayes-Vigoda} showed that for all $\delta > 0$, we have $k_+(\Delta) \leq (1+\delta)\Delta$ provided $\Delta \geq c_\delta \ln n$ (where $c_\delta$ is a constant depending on $\delta$); for graphs of girth at least $6$ and large enough $\Delta$, Dyer, Frieze, Hayes and Vigoda~\cite{Dyer-Hayes}  showed that $k_+(\Delta) \leq 1.49\Delta$; for planar graphs Hayes, Vera and Vigoda~\cite{DBLP:journals/rsa/HayesVV15} obtained the sub-linear bound  $k_+(\Delta)\leq O(\Delta/\ln{(\Delta)})$.




\subsection{Perfect sampling} 


The algorithms described above produce samples that are only approximately uniform. The variation from uniformity can be reduced by allowing the algorithm to run longer, but it cannot be made zero; these methods do not yield perfectly uniform samples. Apart from its independent theoretical
appeal, perfect sampling has some advantages over approximate sampling.
It potentially yields FPRASs with smaller expected running time~\cite[Theorem~7]{Huber98}, because unlike with approximate sampling algorithms there is no need to ensure that the output distribution of the algorithm is sufficiently close to the target distribution in total variation distance. 
Moreover, perfect sampling algorithms are typically designed in such a way that the output produced when the algorithm stops is guaranteed to be uniform. One might be unable to formally guarantee that the expected running time is small; yet the quality of the output is never in question. In contrast, for efficient algorithms for approximate sampling, the running time may be bounded,
but in the absence of guarantees (on the mixing time, say) the output distribution may be far away from the target distribution, thereby rendering the output unreliable for statistical applications.
\sidnote{Please look at this reworded statement.}
\phnote{This para (the part in red) seems too verbose. Can it be rewritten in a couple of lines. This is not urgent and can be deferred to post-submission}


 

The intriguing fact that perfect sampling is in general possible using Markov chains was established by Propp and Wilson \cite{PW} in a seminal work which introduced the technique of \emph{coupling from the past (CFTP)} to generate perfectly uniform samples; Levin, Peres and Wilmer~\cite[Section~22.1]{LevinPeres}
point out that ideas that underlie CFTP can be traced back to the 1960s.
This paradigm has been applied to the problem of perfectly sampling $k$-colorings.
However, in contrast to the best bounds on $k_{+}(\Delta)$, which grow only linearly in $\Delta$,  the upper bounds on $k$ for perfect sampling are less impressive. \sayantannote{the red line is repeated}To better describe and compare these results, let us define $k_0(\Delta)$ to be the minimum integer $k^*$ such that there is a randomized algorithm that, given an $n$-vertex graph of maximum degree $\Delta$ and integer $k>k^*$, produces a perfectly uniform $k$-coloring of $G$ in expected time $\poly(k,n)$. 
By applying CFTP with the bounding chain approach Huber~\cite{Huber98, Huber04}, showed that  $k_0(\Delta)\leq \Delta^2+2\Delta$\footnote{Huber~\cite{Huber98} presents two different algorithms for sampling colorings which together imply that $k_0(\Delta)\leq  \min\left(\Delta^2+2\Delta,\frac{\Delta\ln n}{\ln\ln n}\right)$; however, the analysis of the algorithm which gives $k_0(\Delta)\leq  \frac{\Delta\ln n}{\ln\ln n}$ seems incomplete (the algorithm actually shows that $k_0(\Delta)\leq r\Delta$, where $r$ is the smallest natural number such that $r^r>n$; the journal version of the paper~\cite{Huber04} does not include this algorithm); an alternative algorithm based on similar ideas and  achieving the same bound is described in detail in the preliminary version of this work: see \href{https://arxiv.org/abs/1909.10323v1}{arXiv:1909.10323v1}.}.

Another paradigm for perfect sampling, related to the Moser-Tardos framework for algorithmic versions of the Lov\'asz local lemma, and also to the celebrated cycle-popping algorithm of Wilson for sampling uniformly random spanning trees, has recently been proposed by Guo, Jerrum and Liu~\cite{guo_uniform_2017}.  However, it turns out that that when this framework is applied to the problem of sampling $k$-colorings, it degenerates into usual rejection sampling: one samples a uniformly randomly coloring (including improper colorings), accepts if the coloring is proper, and rejects the current coloring and repeats otherwise.  The expected running time of such a procedure is proportional to the inverse of the fraction of proper colorings among all colorings, and hence cannot in general be bounded by a polynomial in the size of the graph. Recently, Feng, Guo and Yin \cite{feng} extending the ideas of~\cite{guo_uniform_2017} showed that Huber's result can be improved to obtain an expected polynomial time perfect sampling algorithm when $k \geq\Delta^2-\Delta+3$; their algorithm requires time $O(n\exp{(\exp{(\poly{(k)})})})$. 
Note that the upper bounds on $k_0(\Delta)$ obtained in these works is quadratic in $\Delta$ in contrast to the linear upper bounds for approximate sampling.

Hence the question remains: can $k$-colorings be efficiently and perfectly sampled when $k$ is a constant times $\Delta$? It was observed that such an improvement can be obtained if one relaxes the (expected) running time to be polynomial in only $n$ (and not in $\Delta$ and $k$). To state these results, let us define the relaxed version $\widetilde{k_0}(\Delta)$ as follows: $\widetilde{k_0}(\Delta)$ is the minimum integer $k^*$ such that there is a randomized algorithm that, given an $n$-vertex graph of maximum degree $\Delta$ and integer $k>k^*$, produces a perfectly uniform proper $k$-coloring of $G$ in expected time $\poly_{\Delta,k}(n)$ (i.e., the dependence on $n$ is polynomial, but the dependence on $\Delta$ and $k$ can be arbitrary). 
A general method for perfect sampling based on approximate counting was suggested by Jerrum, Valiant and Vazirani~\cite[Thm 3.3]{JerrumValiantVazirani}; that is, using an efficient algorithm for deterministically approximately counting the number of $k$-colorings, one can efficiently sample perfectly. This approach when used together with 
the deterministic approximate counting algorithm of Gamarnik and Katz \cite{GamarnikKatz},
yields $\widetilde{k_0}(\Delta) \leq 2.78 \Delta$ for triangle free graphs; approximate
counting algorithms in subsequent works due to Lu \& Yin \cite{LuYin},
and Liu, Sinclair \& Srivastava \cite{SrivastavaSinclairLiu}, yield 
$\widetilde{k_0}(\Delta) \leq 2.58 \Delta$ and $\widetilde{k_0}(\Delta) \leq 2 \Delta$, respectively. The running time of these algorithms has the form $O(n^{f(k,\Delta)})$ (for instance in~\cite{SrivastavaSinclairLiu} the exponent contains an  $\exp{(\Delta)}$ term). Another point to be noted is that these deterministic approximate counters are based on decay of correlations or the so-called `polynomial interpolation' method of Barvinok~\cite{Barvinok}, which are not as simple as the Markov chain based algorithms (e.g., the CFTP based algorithms of Huber and in this paper).\sidnote{want to mention that MC algorithms are simpler than the correlation decay/polynomial interpolation methods}
  For instance, the MC based algorithms offer an appealing combinatorial explanation of the number of colors required for `mixing' to take place unlike the other methods. In any case, none of these algorithms yield truly linear bounds on $k_0(\Delta)$.  
  

\subsection{Our contribution}




\begin{theorem}[Main result]
\label{thm:main}
$k_0(\Delta) \leq 3\Delta$. 
In particular, there is a randomized algorithm which we call \textup{\textsc{PerfectSampler} (\cref{alg:1})}, based on \textup{CFTP} that\sidnote{put final algo name here} 
given an $n$-vertex graph $G=(V,E)$ of maximum degree $\Delta$ and $k> 3\Delta$, returns a uniformly random $k$-coloring of $G$. The algorithm uses fair independent unbiased coin tosses (with probability $1/2$ for head and tail), and stops in expected time $O((n\log^2 n)\cdot (\Delta^2\log\Delta\log k))$. 
\end{theorem}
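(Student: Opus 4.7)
The plan is to apply Propp--Wilson's Coupling from the Past (CFTP) with a carefully designed bounding chain tailored to the regime $k>3\Delta$. Recall the general recipe: we simulate Glauber dynamics from time $-T$ to time $0$ under a grand coupling, and detect coalescence of all starting colorings via a bounding chain that assigns each vertex $v$ a list $S_v\subseteq[k]$ guaranteed to contain the true color of $v$ in every coupled trajectory. Coalescence corresponds to the event that $|S_v|=1$ for all $v$; when this happens at time $0$, the common value is a perfectly uniform sample. If the doubling scheme $T=1,2,4,\dots$ is used and each block has finite expected coalescence time, summing a geometric series gives the total expected runtime.

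The first obstacle is that Huber's naive bounding-chain update (pick $v$ and color $c$ uniformly; set $S_v=\{c\}$ if $c\notin \bigcup_{u\sim v}S_u$, else grow $S_v$) only contracts in expectation when $k$ exceeds roughly $\Delta\cdot\max_u|S_u|$. Since $|S_u|$ can be as large as $\Delta$ in the worst case, this gives $k>\Delta^2+2\Delta$. To break the quadratic barrier I would introduce a two-phase procedure. In the first phase (a compression subroutine, corresponding to the \textsc{Compress} macro appearing in the paper), starting from the maximal bounding state $S_v=[k]$, I would run a modified update rule that drives the list sizes down to a bounded regime (say $|S_v|\le 2$ or some small absolute constant) in $O(n\log n)$ steps in expectation. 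The update here must be more delicate than the naive one: it needs to aggressively prune $S_v$ using information about which colors are known to be \emph{absent} from neighbor lists, while still being defensible as a valid bounding chain for the grand coupling.

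In the second phase (coalescence), once neighbor lists are guaranteed to have size at most a constant $c$, picking a random color $c^\star$ lands outside $\bigcup_{u\sim v}S_u$ with probability at least $1-c\Delta/k$. Choosing $c=2$ and $k>3\Delta$ yields a contraction probability exceeding $1/3$ per step, so the potential $\Phi=\sum_v(|S_v|-1)$ decays geometrically under a path-coupling-style analysis; coalescence follows in $O(n\log n)$ further steps in expectation. Combining the two phases with CFTP doubling, each trial takes $O(n\log n)$ update steps, each update costs $O(\Delta\cdot\mathrm{polylog}(k,\Delta))$ bookkeeping, and the logarithmic doubling contributes an extra $\log n$ factor, producing the claimed $\widetilde{O}(n\Delta^2\log k)$ bound.

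The main obstacle I anticipate is the compression step: designing an update rule that provably brings $\max_v|S_v|$ down to a constant in $\poly(n)$ expected time while remaining a legitimate bounding chain for the Glauber coupling. One has to reconcile two competing pressures — rules that shrink quickly tend to violate the domination property, while rules that clearly bound the coupling tend to shrink only when $k\gg\Delta^2$. The novelty promised by the abstract lies exactly in resolving this tension, likely via a rule that re-samples a fresh color for $v$ conditional on carefully chosen "safe" events and exploits randomization in the compression phase to keep the average list size provably small before coalescence is attempted.
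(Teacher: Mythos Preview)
Your high-level architecture matches the paper's: CFTP, a bounding chain with lists $S_v$, a first phase that brings all lists down to size at most $2$, and a second phase that drives them to singletons. But the proposal has a genuine gap, and it is not only in the compression phase you flag.

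Your coalescence phase is also broken as stated. You write that once $|S_u|\le 2$ for all $u$, ``picking a random color $c^\star$ lands outside $\bigcup_{u\sim v}S_u$ with probability at least $1-2\Delta/k$''. That probability is fine, but you do not say what happens when $c^\star$ \emph{does} land in the union. Under Huber's rule the list at $v$ then becomes a set of size $\Delta+1$, which immediately destroys the invariant $|S_v|\le 2$ and sends you back to the quadratic regime. So the same tension you identify for compression---``rules that shrink quickly tend to violate the domination property''---is present in your second phase too; you need a single-vertex update that \emph{always} returns a list of size at most $2$ while still bounding the Glauber step. The paper resolves this with its \textsc{Contract} primitive: sample $c_1$ uniformly from $[k]\setminus S_L(v)$ and $c_2$ uniformly from $S_L(v)\setminus Q_L(v)$ (where $Q_L(v)$ collects colors from neighbors whose lists are already singletons), draw $\tau\in[0,1]$, and set $L'(v)=\{c_1\}$ if $\tau\le p_L:=1-\frac{|S_L(v)|-|Q_L(v)|}{k-\Delta}$, else $L'(v)=\{c_1,c_2\}$. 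At decode time the actual coloring $\chi$ uses the larger threshold $p_\chi=1-\frac{|S_L(v)|-|Q_L(v)|}{k-|\chi(N(v))|}\ge p_L$ to pick between $c_1$ and $c_2$, which is why the bounding property holds. This primitive is what makes both phases work; without it neither your compression nor your coalescence can be carried out.

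Two smaller discrepancies. First, the paper's collapse phase is not a random $O(n\log n)$ process: it is a deterministic sequence of exactly $|E(G)|+n$ updates, visiting vertices $v_1,\dots,v_n$ in order, where before contracting $v_i$ one applies \textsc{Compress} (not \textsc{Contract}) to each neighbor $w\in N_{>}(v_i)$ using a common set $A_i$ of $\Delta$ colors chosen to intersect the already-contracted lists $L(w)$ for $w\in N_{<}(v_i)$. This careful choice of $A_i$ is what guarantees $|S_L(v_i)|\le 2\Delta<k-\Delta$ deterministically, so the contract step is legal. Second, the updates in the paper are not i.i.d.\ and CFTP is not run with doubling: the distribution $\mathcal{D}$ is on entire blocks of $T$ correlated updates (each update depends on the current bounding list), and one resamples independent blocks until the predicate $\Phi$ fires. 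The random-walk drift you anticipate does appear in the coalescence analysis, but the relevant probability of contracting to a singleton is $p_L$ above, not $1-2\Delta/k$, and the drift bound $\frac{n-X_t}{n}\bigl(1-\frac{2\Delta}{k-\Delta}\bigr)$ is where $k>3\Delta$ is actually used.
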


Our result is based on Coupling From the Past (CFTP). In the rest of this section, we briefly review CFTP as it is applied to the problem of $k$-coloring,\sayantannote{coloring or colorings?} and describe its efficient implementation using the Bounding Chain method roughly along the lines of Huber~\cite{Huber98,Huber04}. We then describe the key ideas that allow us to improve the upper bound on $k_0(\Delta)$ to $3\Delta$.

Consider the standard Markov chain for $k$-coloring that evolves based on the Glauber Dynamics: in each step a random vertex $v$ is chosen and its color is replaced by a uniformly chosen color not currently used by any of its neighbors. The standard CFTP algorithm~\cite{PW} based on this Markov chain, assumes that we generate a sequence of random variables, $(v_{-1},\sigma_{-1}), (v_{-2},\sigma_{-2}), \ldots,$
where $v_i$ is a random vertex in $V(G)$ and $\sigma_i$ is a random permutation of the set of colors $[k]$, chosen uniformly and independently. 
For $i=-1,-2,\ldots$, let $U_i$ be the operation on $k$-colorings
that performs the following update based on the pair $(v_i,\sigma_i)$.
Given a proper $k$-coloring $\chi:V \rightarrow [k]$ and the pair $(v_i,\sigma_i)$, let $U_i(\chi)$ be the coloring $\chi'$ defined as follows: $\chi'(v_i)$ is the first color in $\sigma_i$ that is not in $\chi(N(v_i))$ and for vertices $w\neq v_i$, $\chi'(w)=\chi(w)$.  Note that $U_i$ maps proper $k$-colorings to proper $k$-colorings. The CFTP algorithm is based on the following principle.
Let $t$ be an integer such that $U_{-1} \circ U_{-2} \circ \cdots \circ U_{t}$ is a constant function, that is, this sequence of updates applied to every proper $k$-coloring results in the same coloring, say $\chi_{\out}$. The algorithm outputs $\chi_{\out}$. Note that this output does not depend on the choice of $t$.  For example, we could run through $i=-1,-2,\ldots$ until the first index $t$ when $U_{-1} \circ U_{-2} \circ \cdots \circ U_{t}$ becomes a constant function, and output the unique $k$-coloring in its image.  
It is well known that if $k> \Delta+1$, then with probability $1$ such a $t<\infty$ exists, and $\chi_{\out}$ is uniformly distributed in the set of all colorings.

The randomized algorithm as stated above is not efficient. The number of starting states for the chains grows exponentially with $n$, and the time taken to keep track of the updates  on each will be prohibitively large. To keep the computation tractable, Huber employs the Bounding Chain (BC) Method (pioneered by him in the context of coloring and also independently by  H\"aggstr\"om \& Nelander~\cite{Hagg99}).
In the Bounding Chain method, instead of precisely keeping track of the various states that are reached after each update operation, we maintain an upper bound: a list of states, which contains all the states that could potentially be reached. In fact, this upper bound for the state reached after update $U_{j-1}$ has been applied will have the special form: $\prod_{v \in G} L_j(v)$, where $L_j(v) \subseteq [k]$. 
That is, when simulating the actions of successive updates $U_{j-1},\ldots, U_{t}$, we do not explicitly maintain the colors across vertices, but rather just a list $L_j(v)$ that includes all colors that vertex $v$ can take in any $k$-coloring reached by performing these updates starting from any initial $k$-coloring. Note in particular, that our choice of $t$ will be good if we can ensure that $|L_0(v)|=1$ for all vertices $v$; for, then we know that $U_{-1} \circ U_{-2} \circ \cdots \circ U_{t}$ is a constant function (on the space of $k$-colorings), and $\chi_{\out}$ is the unique coloring in $\prod_v L_0(v)$. 

We are now in a position to give a high-level description of Huber's BC method~\cite{Huber98,Huber04}. After a short initial phase of updates which act as warm-up, Huber maintains the invariant that $|L_j(v)|\leq \Delta+1$ for all vertices $v$. To measure progress towards the goal that $|L_0(v)|=1$ for all vertices $v$, let us define $W_j$ to be the number of vertices $v$ such that $|L_j(v)|=1$. Hence, we want $W_0=n$. Now, suppose that at time $j$ we have the update operation $U_j$ given by $(v_j,\sigma_j)$.
Consider $L_j$ such that $|L_j(v)|\leq \Delta+1$ and let $S_{L_j}(v)$ be the union of colors present in the lists of neighbors of $v$.
Then, Huber sets $L_{j+1}(v)=\set{\sigma(1)}$ if $\sigma(1)\notin S_{L_j}(v)$; otherwise $L_{j+1}(v)=\set{\sigma(1),\ldots,\sigma(\Delta+1)}$. For all $w\neq v$, $L_{j+1}(w)=L_{j}(w)$.
Notice that if $\chi \in \prod_{w \in G} L_j(w)$ then $U_j(\chi)\in \prod_{w \in G} L_{j+1}(w)$ as $v$ definitely finds an available color in $\set{\sigma(1),\ldots,\sigma(\Delta+1)}$ and hence $(U_{j}(\chi))(v)\in \set{\sigma(1),\ldots,\sigma(\Delta+1)}$.
Hence, we make progress (towards our goal of $W_0=n$) whenever $\sigma(1)\notin S_{L_{j}}(v)$ and suffer a loss otherwise. To be able to have a non-trivial probability of making progress we need that $k>|S_{L_{j}}(v)|$ ($|S_{L_{j}}(v)|$ can potentially be as large as $\sum_{w\in N(v_j)}|L_j(w)|$ which in turn can be $\Delta\times(\Delta+1)$) and this is ensured by having $k>\Delta^2+\Delta$.
However, $W_j$ evolves as a random walk on $\set{0,\ldots,n}$ (with $n$ as absorbing state) and to have sufficient drift to the right we require an extra margin of $\Delta$ in $k$ and hence Huber assumes $k>\Delta^2+2\Delta$.
It then follows that in expected time $\poly(n,k)$ one can find the starting time $t$ so that $|L_0(v)|=1$ for all vertices $v$ and hence $U_{-1} \circ U_{-2} \circ \cdots \circ U_{t}$ is a constant function. 
We omit the detailed analysis of Huber's method, but note that for this method to succeed, $k$ must be larger than the product of the maximum degree ($\Delta$) and the upper bound on the size of $L_j(v)$ that we can ensure plus an extra margin of $\Delta$; this implementation, therefore, yields only a quadratic bound on $k_0(\Delta)$. 

\sidnote{@jaikumar: would you like to mention that random walk on a line has been used in the past in the context of sampling here? and are there any references?}

We improve upon this using a better implementation of the Bounding Chain Method. 
After a short initial warm-up phase of updates (which we formally call the collapsing phase), the set of colors $L_i(v)$ in our implementation will be of size at most two; this will allow us to ensure perfect sampling as long as $k>3\Delta$. The correctness of our algorithm will still rely on the Markov chain based on Glauber Dynamics described earlier. However, we depart substantially from earlier works in designing our updates that implement the Glauber Dynamics. Recall that a sequence of random update operations $U_{-1}, U_{-2}, \ldots,$ need to be designed in the CFTP algorithm. In Huber's approach the $U_i$'s were independently and identically distributed (based on independent choices of the pairs $(v_i,\sigma_i)$). Our new update operations will not be chosen independently but will have a rather special distribution. This distribution is designed keeping in view our goal of restricting the bounding lists $L_t(v)$ to size at most two, and is best understood in the context of the evolution of these lists in our implementation of the Bounding Chain Method, which we describe in the subsequent sections
. For now, we outline the main properties of this distribution. For $0> i > j$, let $U[i,j] := (U_{i}, U_{i-1}, \ldots, U_{j})$ and let
$U(i,j) := U_i \circ U_{i-1}\circ \cdots \circ U_j$.  (Note $U[i,j]$ refers to the sequence (or array) of random choices that describe the $i-j+1$ update functions while $U(i,j)$ refers to the composed update function.)
\begin{lemma} \label{lm:main}
Let $G$ be an $n$-vertex graph with maximum degree $\Delta$. Let $k> 3\Delta$.
Then, there is a positive integer $T$ which is $\poly(k,n)$, a joint distribution $\mathcal{D}$ for $T$ updates, $U[-1,-T]$, 
and a predicate $\Phi$ on the support of $\mathcal{D}$, satisfying the following conditions.
$(T=2n\ln n{(k-\Delta)}/({k-3\Delta}) + |E(G)| + n$ where $|E(G)|$ is the number of edges in $G$.)
\begin{enumerate}[label=(\alph*)]
    \item A sample with distribution $\mathcal{D}$ can be generated and
          the predicate $\Phi$ can be computed in time $\poly(n,k)$; further, each update instruction $U_i$ is efficient, i.e., given a $k$-coloring $\chi$, $U_i(\chi)$ can be computed in time $\poly{(k,n)}$;
    \item If $Z$ is a uniformly generated proper $k$-coloring of $G$ and 
    $U[-1,-T]$ is picked according to $\mathcal{D}$ independently of $Z$, then $U(-1,-T)(Z)$ is a uniformly distributed $k$-coloring;
    \item  If $\Phi(U[-1,-T])=\textup{\textsc{true}}$, then $U(-1,-T)$ is a constant function (on the set of proper $k$-colorings), that is, \\  $|\{U(-1,-T)(\chi): \mbox{$\chi$ is a $k$-coloring}\}|=1$;
    \phnote{removed remark on efficiency as it is implied by the furthermore in (a)}
    \item $\Pr_{\mathcal{D}}[\Phi(U[-1,-T])=\textup{\textsc{true}}] \geq \frac{1}{2}$.
\end{enumerate}
\end{lemma}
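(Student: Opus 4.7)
The plan is to realize $\mathcal{D}$ and $\Phi$ via a bounding-chain construction. I maintain lists $L_j : V \to 2^{[k]}$, initialized with $L_{-T}(v) = [k]$ for every $v$, together with update-specific rules $L_{j} \mapsto L_{j-1}$ coupled to each $U_j$ so as to preserve the invariant that for every proper $k$-coloring $\chi$ consistent with $L_j$ (meaning $\chi(v) \in L_j(v)$ for all $v$), the image $U_j(\chi)$ is consistent with $L_{j-1}$. The predicate $\Phi(U[-1,-T])$ is defined to be \textsc{true} iff $|L_{-1}(v)| = 1$ for every vertex $v$; when this holds, the bounding-chain invariant forces every proper coloring to be mapped by $U(-1,-T)$ into the unique element of $\prod_v L_{-1}(v)$, which gives (c). I plan for each individual $U_j$ to be a standard Glauber-type step (pick a vertex $v_j$, pick a uniform permutation $\sigma_j$ of $[k]$, and set $U_j(\chi)(v_j)$ to the first color in $\sigma_j$ that avoids $\chi(N(v_j))$); since this single-step operator preserves the uniform distribution on proper $k$-colorings, property (b) will follow by composing the $T$ updates, and property (a) is immediate because each step and the check of $\Phi$ are local.

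The $T$ updates split into two phases corresponding to the decomposition $T = (|E(G)| + n) + 2n \ln n \cdot (k-\Delta)/(k-3\Delta)$. In the first \emph{collapsing} phase of length $|E(G)| + n$, I intend to process each edge and vertex once, feeding the random ingredients of the $U_j$'s in a time-correlated way (rather than i.i.d., in contrast to Huber's scheme), so that the associated bounding-chain rule compresses every $L(v)$ from $[k]$ to a list of size at most $2$ with overwhelming probability. In the subsequent \emph{contraction} phase of length $2 n \ln n (k - \Delta)/(k - 3\Delta)$, the update is designed to preserve the invariant $|L_j(v)| \le 2$ throughout, while collapsing size-$2$ lists to singletons whenever a random probe succeeds.

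The core of the argument is (d). Let $W_j$ count the number of vertices with $|L_j(v)| = 1$ during the contraction phase. Under the invariant $|L_j(v)| \le 2$, the set $S_{L_j}(v_j)$ of colors appearing in neighbors' lists has size at most $2\Delta$, so the uniformly random first color $\sigma_j(1)$ lies outside $S_{L_j}(v_j)$ with probability at least $(k - 2\Delta)/k$ --- a success that turns a size-$2$ list at $v_j$ into a singleton --- while an adverse outcome (a singleton at $v_j$ growing to a size-$2$ list) occurs with probability at most $\Delta/k$. Averaging over the uniform choice of $v_j$, one obtains that $n - W_j$ is stochastically dominated by a biased random walk on $\{0,\ldots,n\}$ absorbed at $0$ with per-step drift at least $(k - 3\Delta)/(nk)$ toward $0$, which is positive precisely because $k > 3\Delta$. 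A standard hitting-time estimate (e.g., bounding the expected absorption time by $O(n \ln n \cdot (k - \Delta)/(k - 3\Delta))$ and applying Markov's inequality) then shows that with probability at least $1/2$ the walk reaches $0$ within the allotted time, i.e., $W_{-1} = n$, proving (d).

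The principal obstacle, and the novelty over Huber, lies in engineering the contraction-phase update so that simultaneously (i) the bounding-chain rule maintains $|L_j(v)| \le 2$ for every $v$, (ii) each $U_j$ on actual colorings is a bona fide Glauber-type step (so that composing the correlated updates still preserves uniformity and hence gives (b)), and (iii) the drift is at least $(k-3\Delta)/(nk)$. Naively copying Huber's i.i.d.\ update can blow the list at the active vertex up to size $\Delta + 1$, costing a factor of $\Delta$ in the required number of colors; the bespoke joint distribution $\mathcal{D}$, in which the randomness at a vertex is subtly coordinated with that at its neighbors, is what permits the stronger size-$2$ invariant and hence the linear-in-$\Delta$ bound. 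Once this construction is in hand, the drift computation and hitting-time estimate above close the argument, and the verifications of (a), (b), (c) reduce to the bounding-chain bookkeeping described in the first paragraph.
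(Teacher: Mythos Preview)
Your high-level plan matches the paper's: two phases (collapse then coalesce) totaling $T=|E(G)|+n+T'$ updates, a bounding-chain invariant, $\Phi$ checking that all final lists are singletons, and a drift argument for part (d). But the proposal has a genuine gap at exactly the point you flag as the ``principal obstacle'': you have not actually supplied an update that maintains $|L_j(v)|\le 2$ while implementing Glauber dynamics, and the update you do write down cannot do so.

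Concretely, you say each $U_j$ is the standard ``first color in a uniform permutation $\sigma_j$ avoiding $\chi(N(v_j))$.'' With that rule, when $\sigma_j(1)\in S_{L_j}(v_j)$ the first available color depends on which of the (up to $2\Delta$) colors in $S_{L_j}(v_j)$ actually lie in $\chi(N(v_j))$; across colorings $\chi\sim L_j$ this can yield many different outputs, and the only sound bounding-list update is to set $L_{j+1}(v_j)$ to a set of size up to $|S_{L_j}(v_j)|+1$. So your probabilities $(k-2\Delta)/k$ and $\Delta/k$ come from an update that violates the very invariant you need. The paper's solution is not ``coordination with neighbors' randomness'' but rather making the \emph{generation} of $\alpha_t$ depend on the current bounding list $L$: one draws $c_1$ uniformly from $[k]\setminus S_L(v)$, $c_2$ uniformly from $S_L(v)\setminus Q_L(v)$, and a threshold $\tau\in[0,1]$; the new list is $\{c_1\}$ if $\tau\le p_L:=1-(|S_L(v)|-|Q_L(v)|)/(k-\Delta)$ and $\{c_1,c_2\}$ otherwise, while the decode step picks $c_1$ iff $\tau\le p_\chi:=1-(|S_L(v)|-|Q_L(v)|)/(k-|\chi(N(v))|)$ or $c_2\in\chi(N(v))$. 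The inequality $p_L\le p_\chi$ is what makes the size-$2$ bound and the Glauber law hold simultaneously. An analogous ``compress'' primitive (using a common set $A$ of $\Delta$ colors) handles the collapsing phase and guarantees all lists reach size $\le 2$ with probability $1$, not merely ``with overwhelming probability.''

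Two smaller corrections once the right update is in place: the per-step drift (conditioned on $X_t<n$) is $\frac{n-X_t}{n}\cdot\frac{k-3\Delta}{k-\Delta}$, with $k-\Delta$ rather than $k$ in the denominator; summing $1/\kappa_i$ gives the stated $T'=2n\ln n\,(k-\Delta)/(k-3\Delta)$. And your list-update direction should be $L_j\mapsto L_{j+1}$ (time runs from $-T$ up to $0$), with $\Phi$ checking $|L_0(v)|=1$.
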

\paragraph{Remark:} The update operations (which act on an exponentially large set) need to be represented succinctly for our algorithm to be efficient. Each operation will be encoded succinctly by tuples. 
(For example, in the discussion above the tuple $(v_i,\sigma_i)$ can be thought of as the encoding of the update operation $U_i$.) The encoding we use is described below. Thus, in part \cref{lm:main} (a), when we need to generate a sample from $\mathcal{D}$, we actually generate the sequence of $T$ tuples corresponding to the update operations. Similarly, the predicate $\Phi$ is expected to take as argument a sequence of tuples and efficiently evaluate to $\textsc{true}$ or $\textsc{false}$; further when $\Phi=\textsc{true}$ we can efficiently compute the (unique) image of $U(-1,-T)$ from the tuples.
We will ensure that the decoding is efficient: given a tuple that represents an update operation $U$ and a proper $k$-coloring $\chi$, the coloring $U(\chi)$ can be computed efficiently.


We then have the following natural randomized algorithm for perfectly sampling $k$-colorings.

\begin{algorithm}
	    \For{$i=0,1,2, \ldots,$} {
            Generate $U[-iT-1,-(i+1)T]$ according to $\mathcal{D}$ \;
            \If{$\Phi(U[-iT-1, -(i+1)T])=\textup{\textsc{true}}$}{
                     Output the unique $k$-coloring in the image of $U(-1,-(i+1)T)$ and \textbf{STOP}; \label{line:step4}
            }
        }
    \caption{\textsc{PerfectSampler}}\label{alg:1}
\end{algorithm}

\begin{proof}[Proof of \cref{thm:main}]
We wish to show that the output of the above algorithm is uniformly distributed in the set of all $k$-colorings. Let $U[-1,-T], U[-T-1, -2T], \ldots, U[-(i-1)T, -iT], \ldots$ be the random sequences that arise when the algorithm 
samples independently from the distribution $\mathcal{D}$. It may be that some of the later sequences are not used by the algorithm if the predicate $\Phi$ evaluates to true on an earlier sequence, but we define all of them anyway for our argument. Let $\chi$ be uniformly chosen $k$-coloring. Fix $i\geq 1$. Then, by \cref{lm:main} (b), $\chi_i=U(-1,-iT)(\chi)$ is uniformly distributed. Let $\chi^*$ be the output of the above algorithm. By \cref{lm:main} (c),
$\chi^*$ and $\chi_i$ are identical whenever $\Phi$ evaluates to true on one of $U[-1,-T], U[-T-1, -2T], \ldots, U[-(i-1)T-1, -iT]$, which happens with probability at least $1-2^{-i}$ by \cref{lm:main} (d). From the duality of total variation distance and coupling, it follows that the distribution of $\chi^*$ is within $2^{-i}$ of the uniform distribution (the distribution of $\chi_i$). Since, $i$ was arbitrary, we see that the distribution of $\chi^*$ is uniform. 

The algorithm is efficient\footnote{\cref{line:step4} of the algorithm can be performed by taking the trivial coloring $\chi=1^V$ and then outputting $U(-1,-(i+1)T)(\chi)$; however, in our implementation of the updates the condition $\Phi(U[-iT-1, -(i+1)T])=\textsc{true}$  will be validated by producing the unique coloring $\chi$ in the image of $U(-iT-1, -(i+1)T)$; so for \cref{line:step4} we output $U(-1, -iT)(\chi)$.  }
because it performs at most two iterations of the \emph{for} loop in expectation, and sampling from $\mathcal{D}$ and the computation of $\Phi$ are efficient by part \cref{lm:main} $(a)$. 
For a detailed analysis of the running time refer to \cref{subsec:2.5}.
\end{proof}

In other words, let $i$ be the first index in \cref{alg:1} such that we find $\Phi(U(-iT-1,-(i+1)T))=\textsc{true}$. So, we know that $|\{U(-1,-T)(\chi): \mbox{$\chi$ is a $k$-coloring}\}|=1$. Now, we update this unique coloring with $U(-iT,-1)$ and output the updated coloring.
The correctness of the algorithm and that it runs in expected time $\poly(n,k)$ follow immediately from \cref{lm:main}; in particular, this justifies \cref{thm:main} barring the expected running time.  
In the rest of this introduction, we describe the distribution $\mathcal{D}$ and outline our proof of the lemma


\paragraph{Representation of update operations:} Our approach is inspired by the Bounding Chain method. To make this precise, we need a definition. By a \emph{bounding list} we mean a list of the form $L=(L(v): v \in L)$, where each $L(v)$ is a set of colors. We refer to $L(v)$ as $v$'s list of colors; thus $L$ is a list of lists. We say that a $k$-coloring $\chi$ is compatible with $L$, and write $\chi \sim L$, if $\chi(v) \in L(v)$ for all $v$, that is, if $\chi \in \prod_v L(v)$. We are now in a position to describe the representation we use. Each update operation will be associated with a $5$-tuple of the form $\alpha=(v,\tau, L, L', M)$, where $v$ is a vertex, $\tau \in [0,1]$, and $L$ and $L'$ are bounding lists, and $M$ is a sequence of at most $\Delta+1$ distinct colors. We refer to the update operation associated with $\alpha$ as $U_\alpha$. 
Thus, the distribution of $U[-1,-T]$ will be specified by providing a randomized algorithm for generating the corresponding sequence of tuples $\alpha[-1,-T]$ and letting $U_t=U_{\alpha_t}$. We now describe some of the important features of this sequence.

Fix $t \in \{-T, \ldots, -1\}$.
Suppose $\alpha[t-1,-T]$ have been generated. Now, consider $\alpha_{t} = (v_t, \tau_t, L_t, L'_t, M_t)$. We will ensure that the following conditions hold.
\begin{enumerate}[label=(\alph*)]
    \item[{[C1]}] The random vertex $v_{t}$ is independent of $\alpha[t-1,-T]$. In Huber's chain, $v_t$ was actually uniformly distributed; we will not be able
    to ensure that; in fact, some of our vertices will be determined by the index $t$ (the current time step); for example, $v_{-T}$ will be a fixed vertex of the graph, not a random vertex.
    \item[{[C2]}] The distribution of $\alpha_t$ will implement the Glauber Dynamics at vertex $v_t$ in the following sense. 
    Condition on $\alpha[t-1,-T]$ and $v_t$ (the first component of $\alpha_t$). Fix a coloring $\chi$ in the image of $U(-T,t-1)$ (note that this operator is determined completely by 
    $\alpha[t-1,-T]$, which we have conditioned on). Now, we require that 
    $\chi'=U_t(\chi)$ has the following distribution: $\chi'(w)=\chi(w)$, for all $w \neq v_t$ and $\chi'(v_t)$ is uniformly distributed in the set of colors
    $[k]\setminus \chi(N(v_t))$. If this condition is satisfied, then we say that $\alpha_t$ satisfies $\gd{(\chi,v_t)}$. Note that this will ensure \cref{lm:main} (b).
    
    \item[{[C3]}] 
    The lists $L_t$ impose a certain restriction on the domain of $U_t$: $U_t$ will be defined only on colorings $\chi \sim L_t$. Thus $L_t$ represents a \emph{precondition} for $U_t$ to be applicable. Similarly, $L'_t$ represents a \emph{postcondition}: if $\chi \sim L_t$, then $U_t(\chi) \sim L'_t$. We will, therefore, have in our sequence that $L'_t = L_{t+1}$.
    Also, $L_{-T}$ will be $([k])^V$. If the above discipline concerning preconditions and post-conditions is maintained, then for the image of $U(-T,-1)$ to be a singleton, it is enough that $|L'_{-1}(v)|=|L_0(v)|=1$ for all $v \in V$. Indeed, our predicate $\Phi$ will 
    verify this by examining $\alpha_{-1}$; to establish \cref{lm:main}(d), we will show that this condition holds with probability at least $\frac{1}{2}$. 
\end{enumerate}

\paragraph{The key ideas:} We discussed above some of the conditions that our random sequence of tuples $\alpha[-1,-T]$ will satisfy. We now informally
describe how $\alpha_t$ is translated or decoded to obtain $U_t$ and how $\alpha[-1,-T]$ is generated. This informal description will differ slightly from the more formal one we present in \cref{sec:def}; but it will let us motivate our definitions, and also throw light on how the new method makes do with fewer colors than Huber's method. 

Initially, at time $-T$, each vertex's list is $[k]$: that is, $L_{-T}(v)=[k]$ for all all $v$. We wish to ensure that in the end all lists have size $1$: that is, $|L'_{-1}(v)|=1$ for all $v$. We will achieve this in two phases. At the end of the first phase, we will ensure that all vertices have lists of size at most $2$ with probability 1. We refer to this phase as the \emph{collapse} phase.  The second phase, the \emph{coalesce} phase, will ensure that with  probability at least $1/2$, the lists of all vertices have size one. The total number of updates in the first and second phases put together will be $T$. We now briefly describe the ideas involved in the two phases.

The updates in these two phases will be generated by two primitives. (i) The first primitive takes vertex $w$ and a set $A$ of at most $\Delta$ colors and produces an update called \emph{compress} update; after this update, the list at $w$ will have at most one element outside $A$. (ii) The second primitive takes a vertex $v$ and generates a random update called \emph{contract} update;
for this primitive to be used, we must ensure that the previous updates have spruced up the neighborhood of $v$ which is said to have occurred when the union of colors in the lists of neighbors of $v$ has size less than $k-\Delta$. But whenever such an update is performed, the list of $v$ immediately contracts to size at most two; in fact, with significant probability it contracts to size one. We describe these primitives in detail in the following sections. For now, let us see roughly see how they are deployed to achieve the goals of the two phases.

\newcommand{\spruceup}{\textsc{spruceup}}
\newcommand{\contract}{\textsc{contract}}

\textbf{Collapsing phase:} The reduction in list size all the way to just two will be achieved by using {contracting} updates. However, for such an update to be applied at a vertex, the total number of colors in the union of the lists of its neighbors must be small (for us it will need to be less than $k-\Delta$; in fact, we will ensure that it is at most $2\Delta$).  Note that our initial bounding list $L_{-T}$ does not satisfy this condition; all lists have size $k$. We, therefore, need to first \emph{spruce up} the neighborhood. Fix an ordering of the vertices, say $v_1,v_2,\ldots,v_n$\footnote{ There is a notational overload here: earlier we had used $v_i$ to denote the vertex chosen at time step $i$ for the update operation, but now we mean it to be the $i^{th}$ vertex in the ordering. This will be clear from the context.}. Conceptually, the contracting phase will
perform the actions in the following sequence:
\[ \spruceup(v_1), \contract(v_1), \spruceup(v_2), \contract(v_2), \]\begin{align*}
\ldots, \spruceup(v_n), \contract(v_n).\end{align*}
Here $\spruceup(v_i)$ is a composite update operation.
It consists of several updates that compress the lists at the neighbors of $v_i$ using a common set $A_i$ of $\Delta$ colors. For example, if $v_1$ has $d_1$ neighbors, then $\spruceup(v_1)$ will consist of $d_1$ compress update operations, one for each of its neighbors. It is easy to see that then the union of the lists at $v_1$'s neighbors will have at most $2\Delta$ colors (each of the at most $\Delta$ neighbors will contribute at most one new color outside $A_1$)---the neighborhood of $v_1$ is thus spruced up. In general, for $v_i$ the operation $\spruceup(v_i)$ will perform the compress operation on those neighbors of $v_i$ which are after $v_i$ in the ordering.
Once the neighborhood of $v_i$ has been spruced up in this fashion, a single contract update ensures that the list size of $v_i$ contracts to two. There is one subtlety, however. After contracting the lists of $v_1,\ldots,v_i$, when we proceed to spruce up the neighborhood of $v_{i+1}$, we only perturb the lists of vertices after $v_{i+1}$ (in particular, the lists of vertices before $v_{i+1}$ remain unperturbed): yet we need to ensure that the union of the lists at $v_{i+1}$'s neighbors (both preceding and succeeding) will have at most $2\Delta$ colors. So we choose the set $A_{i+1}$ so that it includes at least one color from the lists of the neighbors where a contraction has already been achieved. In ~\cref{collapsephase}, we describe the collapsing phase in detail.

\textbf{Coalescing phase:} Suppose the collapsing phase has successfully contracted all lists to size at most two. Our goal now is to extend the above sequence with some more randomly generated updates so that with probability at least $1/2$ the final list sizes all become one. We again use the contract update operation described above, this time exploiting the feature that it 
contracts lists to size just one with significant probability. 
However, while vertices with list size two can hope to see a reduction in their list size, a vertex whose list size is already one will, with some probability, acquire a list size of two. In this phase, we randomly pick vertices and perform a contract update on them. 
Note that a contract update never results in a list of size more than two; so, all neighborhoods stay spruced up at every point in the coalescing phase. If we track the number $W_t$, which is the number of vertices with list size $1$ at time $t$,  this quantity performs a random walk on the number line (between $0$ and $n$, with $n$ as absorbing)\footnote{Whenever for a vertex $v$ all its neighbors have list size $1$ then the contract update applied to $v$ produces a list of size $1$ at $v$.} with a non-negligible bias towards $n$. We observe that if $k$ is large enough ($k > 3\Delta$), then with high probability this walk will hit $n$ within $\poly(n,k)$ steps, and helps us justify \cref{lm:main} (d). In ~\cref{subsec:2.3}, we describe the coalescing phase in detail.

\subsection*{Organisation of this paper}
In the following sections, we elaborate on ideas outlined above, and justify \cref{lm:main}. In \cref{subsec:2.1}-\cref{subsec:2.3}, we formally define $T$, the distribution $\mathcal{D}$, the primitives that we use to generate the $\alpha$s at different stages of the algorithm, and the precise correspondence between the strings of type $\alpha$ and the corresponding update operations of type $U_{\alpha}$. Finally, in \cref{subsec:2.5} we formally define the predicate $\Phi$ and collate all our results from the previous sections to establish parts $(a), (b), (c)$ and $(d)$ of \cref{lm:main}. The running time analysis of our algorithm is also presented in \cref{subsec:2.5}.

\section{The distribution $\mathcal{D}$ and the predicate $\Phi$}
\label{sec:def}
In this section, we will prove \cref{lm:main}. Recall that we have a graph $G=(V,E)$ on $n$ vertices and the number of colors $k>3\Delta$.
The update sequence $(\alpha_{-T},\ldots,\alpha_{-T'-1})$ will correspond to the collapsing phase of our algorithm and $(\alpha_{-T'},\ldots,\alpha_{-1})$ to the coalescing phase. In particular, we set $T'=2\frac{k-\Delta}{k-3\Delta}n\ln n$ and $T=T'+|E(G)|+n$ where $|E(G)|$ is the number of edges in $G$.
The reasons for these values will be clear in the following subsections.
\subsection{The update $\alpha$ and its relation to $U_\alpha$}
\label{subsec:2.1}
Recall that an update operation is represented by a tuple $\alpha$ of the form $(v,\tau,L,L',M)$. In the previous section, we informally indicated the role played by each of the components of this $5$-tuple. In this section, we specify exactly how these components are generated and how they determine the update operation $U_{\alpha}$. 
As stated in the introduction, we have two types of updates, the compress update and the contract update. The generation and decoding methods are
different for the two. We describe, for each type, how
the corresponding $\alpha$ is generated and how, given such an $\alpha$, the corresponding $U_\alpha$ is applied to a coloring $\chi$. Our final sequence of updates will be obtained by generating the updates one after another according to a strategy that we describe later.

The update operation associated with $\alpha=(v,\tau,L,L',M)$ will act on colorings 
$\chi \sim L$; that is, whenever we use $\alpha$ in our sequence, it will be
guaranteed that the previous update operations result in a coloring $\chi \sim L$. 
However, if each $L(v)=[k]$ for all $v$, then $U_\alpha$ acts on all colorings.
Fix a coloring $\chi$. The operation $U_\alpha$ will attempt to recolor the vertex $v$ (leaving the colors of the other vertices unchanged) by picking a color from the sequence $M$. 
The $\alpha$ we generated will have $L'(v) = M$ barring the order; this will ensure that $U_\alpha(\chi)\sim L'$.
In order to ensure that the coloring is proper, the color chosen for $v$ must avoid the colors used by $v$'s neighbors. In particular, if $|L(w)|=1$ for a neighbor $w$ of $v$, then the unique color in $L(w)$ will never be a candidate color for $v$. Thus, the following two sets will play a central role in our definition of $U_{\alpha}$:
\begin{align*}
    S_{L}{(v)}&=\bigcup\limits_{w\in N(v)}L(w)& &\mbox{and}&
    Q_{L}{(v)}&=\bigcup\limits_{\substack{w\in N(v) \\ \abs{L(w)}=1}}L(w).
\end{align*}
In the following subsections we will consider $\alpha$s of two types, depending on the size of $M$.
\begin{description}
\item[Type \underline{compress} ($|M|=\Delta+1$):] Such an $\alpha$ will be used to \\
spruce up the neighborhoods.  
\item[Type \underline{contract} ($|M| \leq 2$):] Such an $\alpha$ will be used in the collapsing phase to contract the list sizes to size at most two, and again in the coalescing phase to make make all list sizes $1$.
\end{description}


\subsubsection{Compress updates}
\label{subsec:WarmUpUpdates}
\newcommand{\init}{\textup{\textsc{in}}}
\newcommand{\WarmupGen}{\textup{\textsc{Compress.gen}}}
\newcommand{\WarmupDecode}{\textup{\textsc{Compress.decode}}}
To specify the compress updates we will present two procedures: $\WarmupGen{}$ and \\ $\WarmupDecode{}$. The procedure $\WarmupGen{}$ takes a tuple  $\alpha_{\init}=(v_{\init},\tau_\init,L_\init, L'_\init, M_\init)$, a vertex $v$ and a list $A$ consisting of $\Delta$ colors, and returns another tuple.
This procedure is randomized: its output $\alpha_{\out}$ is a random tuple of type {compress}, and is the immediate successor of $\alpha_\init$ in our sequence of updates. The update operation corresponding to such a tuple is obtained using procedure $\WarmupDecode{}$, which takes a tuple $\alpha_\out$ (produced by $\WarmupGen{}$) and a coloring $\chi\sim L'_\init$, and produces another coloring, say $\chi'$. Thus, the update operation $U_{\alpha_\out}$ is the map $\chi \mapsto \WarmupDecode[\alpha_\out, \chi]$. The following lemma describes the relationship between the two procedures, and their important properties.

\begin{lemma} \label{lm:NewWarmUpSampler}
Let $\alpha_{\init}=(v_{\init},\tau_\init,L_\init,L'_\init, M_\init)$ be an arbitrary 5-tuple, $v \in V$ and $A$ be a subset of $\Delta$ colors. Then,
\begin{enumerate}[label=(\alph*)]
\item If $\alpha_{\out}=(v_{\out},\tau_{\out},L_{\out}, L'_{\out},M_{\out})$ is a random tuple produced by $\WarmupGen[\alpha_{\init},v,A]$, then $L_{\out}=L'_{\init}$, $L'_{\out}(u)=L_{\init}(u)$ for all $u \neq v$, and $L'_{\out}(v)$ has the form $A \cup \{c\}$ for some color $c$ outside $A$.
\item For all $\chi\sim L_{\out}$, we have $\chi':=\WarmupDecode[\alpha_{\out},\chi] \sim L'_{\out}$ (with probability $1$).
\item For all $\chi\sim L_{\out}$, the coloring $\chi'$ has the same distribution as $\gd(\chi,v)$, that is, $\chi'(w)=\chi(w)$, for all $w \neq v$, and $\chi'(v)$ is uniformly distributed\footnote{Note that 
the randomness in $\chi'(v)$ arises from the random choices made in generating $\alpha_{\out}$ using ~$\WarmupGen[\alpha_{\init},v,A]$.}
in the set of colors $[k]\setminus \chi(N(v))$.
\item Except for copying of the list $L'_\init$, the expected running time of $\WarmupGen{}$ is $O(\Delta\log k+\log n)$. The time needed to update a $k$-coloring $\chi$ using $\WarmupDecode{}$ is $O(\Delta (\log \Delta \log k + \log n))$.
\end{enumerate}
\end{lemma}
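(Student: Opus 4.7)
My plan is to first concretely specify the procedures $\WarmupGen{}$ and $\WarmupDecode{}$, and then verify the four clauses in turn. For $\WarmupGen[\alpha_\init,v,A]$, I would set $L_\out := L'_\init$ and $L'_\out(u) := L_\out(u)$ for every vertex $u \neq v$, sample a color $c$ uniformly from $[k]\setminus A$, independently sample a uniformly random permutation $\sigma$ of $A$, independently sample a uniform real $\tau_\out\in[0,1]$, set $L'_\out(v) := A\cup\{c\}$, and let $M_\out$ encode $c$ and $\sigma$ as an ordered sequence of $\Delta+1$ distinct colors. Clause (a) is then immediate from this construction.

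For $\WarmupDecode[\alpha_\out,\chi]$, given $\chi \sim L_\out$, I would let $R := \chi(N(v))$ and $r := |R|$ and decode as follows: if $c \in R$, output $\chi'(v)$ equal to the first element of $\sigma$ lying outside $R$; if $c \notin R$, use $\tau_\out$ as a rejection threshold, outputting $\chi'(v) := c$ when $\tau_\out \leq (k-\Delta)/(k-r)$ and otherwise outputting the first element of $\sigma$ outside $R$. Whenever the fallback is reached, $r < \Delta$, so $|A \setminus R| \geq \Delta - r > 0$ and the fallback is well defined; in every branch $\chi'(v) \in A \cup \{c\} = L'_\out(v)$, giving clause (b).

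The heart of the proof is clause (c), which I would verify by directly computing the marginal distribution of $\chi'(v)$ over the randomness in $c$, $\sigma$, and $\tau_\out$ for a fixed $\chi$. Fix $d \in [k] \setminus R$ and set $s := |A \cap R|$. If $d \notin A$, then $\chi'(v) = d$ occurs only when $c = d$ and $\tau_\out \leq (k-\Delta)/(k-r)$, with total probability $\frac{1}{k-\Delta}\cdot\frac{k-\Delta}{k-r} = \frac{1}{k-r}$. If $d \in A$, then $\chi'(v) = d$ arises in the two subcases ``$c \in R$'' and ``$c\notin R$ with $\tau_\out$ in the fallback range'', which contribute
\[
\frac{r-s}{k-\Delta}\cdot\frac{1}{\Delta-s} \quad\text{and}\quad \frac{k-\Delta-r+s}{k-\Delta}\cdot\frac{\Delta-r}{k-r}\cdot\frac{1}{\Delta-s}
\]
respectively. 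The main calculation I expect to carry out is the algebraic identity
\[
(r-s)(k-r) + (k-\Delta-r+s)(\Delta-r) = (k-\Delta)(\Delta-s),
\]
which, after dividing through by $(k-\Delta)(\Delta-s)(k-r)$, collapses the sum to exactly $\frac{1}{k-r}$. Once this identity is in hand, the rest is bookkeeping.

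Finally, clause (d) follows from a routine running-time count. $\WarmupGen{}$ samples one uniform color ($O(\log k)$ bits), a uniform permutation of $\Delta$ colors (Fisher--Yates in $O(\Delta\log k)$ time), and the threshold $\tau_\out$ ($O(\log n)$ bits of precision, which suffices since it is compared only to rationals of denominator at most $k$). $\WarmupDecode{}$ builds $R$ by scanning the at most $\Delta$ neighbors of $v$ and then performs $O(\Delta)$ membership tests in $R$ using a balanced search structure on $\log k$-bit colors, yielding the claimed $O(\Delta(\log\Delta\log k + \log n))$ bound. The main obstacle throughout is selecting the correct rejection threshold in the decode rule; once the value $(k-\Delta)/(k-r)$ is chosen, every clause of the lemma falls into place.
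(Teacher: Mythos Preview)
Your construction of $\WarmupGen$ and $\WarmupDecode$ and your verification of clauses (a), (b), and (d) match the paper's proof almost exactly (modulo the cosmetic choice of accepting $c$ when $\tau_\out \le (k-\Delta)/(k-r)$ versus the paper's $\tau \ge 1-(k-\Delta)/(k-r)$). For clause (c) the paper takes a slightly more economical route: rather than computing $\Pr[\chi'(v)=d]$ separately for each $d\in A\setminus R$, it observes that the law of $\chi'(v)$ is a mixture of the uniform distribution on $[k]\setminus(A\cup R)$ (when $c$ is output) and the uniform distribution on $A\setminus R$ (when the first available entry of $\sigma$ is output), and then checks only the single mixture weight $\Pr[\chi'(v)=c]=\frac{k-|A\cup R|}{\,k-r\,}$. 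Your direct per-color computation and the identity $(r-s)(k-r)+(k-\Delta-r+s)(\Delta-r)=(k-\Delta)(\Delta-s)$ are of course equivalent, just a few more lines of algebra; the paper's mixture viewpoint sidesteps that identity entirely. One small slip: the precision needed for $\tau_\out$ is governed by denominators of size at most $k$, so it is $O(\log k)$ bits rather than $O(\log n)$, but this does not affect the stated bound.
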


To prove this lemma, we need to specify $\WarmupGen$ and $\WarmupDecode$.
Before presenting the code and the proof of the lemma, we present the idea behind them.
Given $\alpha_\init$, $v$ and $A$, we somehow want to update the color of vertex $v$. The precise color to assign to $v$ will need to depend on the current coloring $\chi$, in particular, on $\chi(N(v))$. If all we wanted was to restrict the size $L'_{\out}(v)$, we could just insist that $v$'s color be confined to a random subset of size $\Delta+1$; that is, 
\WarmupGen{} would specify a random sequence of $\Delta+1$ distinct colors and once $\chi$ is known, we would replace $\chi(v)$ by the first color in this list not currently used by any neighbor of $v$. However, as explained in the introduction, we wish to ensure that the lists of different vertices overlap  with $A$. So we actually generate a random permutation of the input set $A$, say $\sigma$ and append to it at the end a random color $c_1$ chosen from $[k]\setminus A$; thus $\alpha_{\out}$ has the form $(v_\out,\tau_\out,L_\out,L'_\out, (\sigma,c_1))$;
here $\tau_\out$ will be chosen uniformly from $[0,1]$; its role will become clear soon. This simple procedure is our $\WarmupGen{}$. Now, once such an $\alpha_\out$ has been specified, to update $\chi(v)$, we have a choice: either we pick $c_1$ or one of the colors from $A$. If $c_1$ is an invalid option (it is being used by a neighbor of $v$), then we have no choice but to pick a color from $A$ (there must be one available!). Now, $c_1$ will be a valid option with probability $(k-|\chi(N(v))\cup A|)/(k-\Delta)$, whereas such a color should actually be used to replace $\chi(v)$  with probability $(k-|\chi(N(v)) \cup A|)/(k-|\chi(N(v))|)$. So whenever $c_1$ is a valid option, we replace $\chi(v)$ by $c_1$ with probability $(k-\Delta)/(k-|\chi(N(v))|)$ and with the remaining probability we use the first valid color from $\sigma$. To implement this acceptance sampling we pick a random number $\tau_\out \in [0,1]$ and accept $c_1$ if it is at least the threshold $1-(k-\Delta)/(k-|\chi(N(v))|)$. This is all that $\WarmupDecode{}$ does. We now present the code (which may be skipped) that implements what we discussed above and formally prove  Lemma~\ref{lm:NewWarmUpSampler}. 

%
%
\begin{algorithm}
\caption{\textsc{Compress}: generation and decoding}
\label{alg:WarmUpSampler}
\setstretch{1.3}
\SetKwInOut{KwIn}{Input}
\SetKwInOut{KwOut}{Output}
\SetKwFunction{gen}{gen}
\SetKwProg{myproc}{Function}{}{}
\myproc{\gen{}\rm{:}}{
 
 \KwIn{$\alpha_{\init}=(v_{\init},\tau_{\init},L_{\init},L'_{\init}, M_{\init})$, $v\in V$ and $A\subseteq [k]$ with $|A|=\Delta$}
 
\KwOut{$\alpha_{\out}=(v_{\out},\tau_{\out},L_{\out},L'_{\out},M_{\out})$}

$\tau_{\out} \xleftarrow{R}[0,1]$ ;~
$\sigma \xleftarrow{R} S_{A}$;~
$c_1\xleftarrow{R}[k]\setminus A$ \; $L'_{\out} \gets L'_{\init}$ \;\label{bigstep} 
$L'_{\out}(v)\gets A\cup \brak{c_1}$;~ 
$M_{\out} \gets (\sigma,c_1)$ \tcp*{Appending $c_1$ at the end of $\sigma$}
\KwRet{$\alpha_{\out}=(v,\tau_\out,L'_{\init},L'_\out,M_\out)$}
}

\SetKwFunction{decode}{decode}
\myproc{\decode{}\rm{:}}{
\KwIn{$\alpha=(v,\tau,L,L',M)$ and a coloring $\chi \sim L$}
\KwOut{$\chi' \sim L'$}\label{chistep}
$\chi' \gets \chi$\;
$p_{\chi}(v) \gets  1-\frac{k-\Delta}{k-|\chi(N(v))|}$\label{c1AndPrtau>p_chi}\;
\uIf{$c_1 \notin \chi(N(v))$ and $\tau \geq p_{\chi}(v)$ \label{loop:compressdecodestart}}
   {$\chi'(v)\gets M[\Delta+1]$ \tcp*[r]{$M$ has the form $(\sigma,c_1)$ where  $\sigma$ is list of $\Delta$ colors.} } 

\Else{
    $\chi'(v)\gets$ first color in the list $M[1,\Delta]$ that is not in $\chi(N(v))$
    \tcp*{If $c_1\in \chi(N(v))$ \\ then such a color is always available as $|\chi( N(v))|\leq \Delta$.}
    }\label{loop:compressdecodeend} 
\KwRet{$\chi'$}}

\end{algorithm}

\begin{proof}[Proof of \cref{lm:NewWarmUpSampler}]

Part $(a)$ is clear from the procedure of \WarmupGen. We update $L_{\out}$ as $L'_{\init}$ and $L'_{\out}$ differs from $L'_{\init}$ only at the vertex $v$ where $L'_{\out}(v)=A\cup \brak{c_1}$.

For part $(b)$ consider any $\chi \sim L'_{\init}=L_{\out}$. Note that during \WarmupGen{} we set $L'_{\out}(v)=A\cup \brak{c_1}$ and $M_{\out}$ as $(\sigma,c_1)$ and during \WarmupDecode{} we update the color of $\chi'(v)$ from within $M_{\out}$ and for all $w\neq v$ we copy the color of $\chi$. This proves part $(b)$.


For part $(c)$ we remind ourselves that the process \gd{($\chi,v$)} requires $\chi'(v)$ to be uniformly distributed on the set $[k]\setminus \chi(N(v))$.
Notice that $c_1=M[\Delta+1]$ is a uniformly random choice of color over $[k]\setminus A$ and hence whenever $c_1$ is chosen for $\chi'(v)$ by \WarmupDecode{} we know its distribution will be uniform over $k\setminus (A\cup \chi(N(v)))$.
Also, whenever we choose a color from $M[1,\Delta]=\sigma$ in \WarmupDecode, where $\sigma$ is a uniformly random permutation of $A$, to update at $\chi'(v)$ we know that its distribution is uniform over $A\setminus \chi(N(v))$.  
Hence, to prove part $(c)$, it suffices to show that $c_1$ is chosen with probability $\frac{k-|A\cup \chi(N(v))|}{k-\chi(N(v))}$. From \cref{c1AndPrtau>p_chi} in \WarmupDecode{} we have:

\begin{align*}
    \Pr[\chi'(v)=c_1]&=\Pr[c_1\notin \chi(N(v))]\times \Pr[\tau\geq p_{\chi}(v)]\\
    &=\frac{(k-|\chi(N(v))\cup A|)}{(k-\Delta)}\times \frac{k-\Delta}{k-|\chi(N(v))|}\\ &=  \frac{(k-|\chi(N(v)) \cup A|)}{(k-|\chi(N(v))|)}.
\end{align*}

To prove part $(d)$ note that for \WarmupGen{} the operations with non-trivial running time are:
\begin{itemize}
    \item Pick $\tau_\out$ uniformly from $[0,1]$ to compare with $p_\chi$ which is a fraction whose denominator may be represented with at most $O(\log k)$ bits.
    \item Pick $\sigma$ uar from $\mathcal{S}_{A}$.
    \item Pick $c_1$ uar from $[k]\setminus A$.
    \item Updating $L'_\out(v)$ and $M_\out$.
\end{itemize}
With access to fair coins the first, second and the third operations require expected time $O(\Delta\log k)$. The fourth operation requires expected  time $O(\Delta\log k+\log n)$. Hence the expected running time of \WarmupGen{} is $O(\Delta\log k+\log n)$. 
For \WarmupDecode{}, notice that the \emph{If} clause in \cref{loop:compressdecodestart}-\cref{loop:compressdecodeend} takes time $O(\Delta(\log k+\log n))$. The \emph{Else} clause finds the first color in $M[1:\Delta]$ which is not in $\chi(N(v))$: we implement this by first sorting the colors in $\chi(N(v))$ and then doing a binary search in the sorted list, sequentially for every color in $M[1:\Delta]$. Thus we conclude that the running time of \WarmupDecode{} is $O(\Delta(\log\Delta\log k + \log n))$.
\end{proof}

\subsubsection{Contract updates}
\label{subsec:BoundedListUpdates}

\newcommand{\BLSGen}{\textup{\textsc{Contract.gen}}}
\newcommand{\BLSDecode}{\textup{\textsc{Contract.decode}}}
\newcommand{\BLSname}{\text{Contract}}

In this section we describe the tuples $\alpha$ of the type {contract} which reduce the list size at some vertex to $\leq 2$, and with  significant probability, make the list size $1$.
This type of updates will be applied both in the collapsing and the coalescing phase.
As in \cref{subsec:WarmUpUpdates}, we will present two procedures: $\BLSGen{}$ and $\BLSDecode{}$. 
The procedure $\BLSGen{}$ takes as input a tuple  $\alpha_{\init}=(v_{\init},\tau_\init,L_\init,L'_\init, \\ M_\init)$ and a vertex $v$ with the promise that $|S_{L'_{\init}}(v)|<k-\Delta$, and returns a random tuple $\alpha_{\out}$ of type contract. The update operation corresponding to such a tuple is obtained using procedure $\BLSDecode{}$, which takes a tuple $\alpha_\out$ (produced by $\BLSGen{}$) and a coloring $\chi\sim L'_{\init}$, and produces another coloring, say $\chi'$. Thus, the update operation $U_{\alpha_\out}$ is the map $\chi \mapsto \BLSDecode[\alpha_\out, \chi]$. The following lemma describes the relationship between the two procedures, and their important properties.

\begin{lemma}
\label{lm:BoundedListSampler}

Let $\alpha_{\init}=(v_{\init},\tau_\init,L_\init,L'_\init, M_\init)$ be an arbitrary 5-tuple and $v \in V$. Suppose $|S_{L_\init}(v)|<k-\Delta$ and that $\alpha_{\out}=(v_{\out},\tau_{\out},L_{\out}, L'_{\out},M_{\out})$ is a tuple produced by $\BLSGen[\alpha_{\init},v]$. Then
\begin{enumerate}[label=(\alph*)]
    \item Let $L=L'_{\init}$. Then $L_{\out}=L$, $L'_{\out}(u)=L(u)$ for all $u \neq v$, and with probability $p_{L}= 1-\frac{|S_{L}(v)|-|Q_{L}(v)|}{k-\Delta}$ we have $|L'_{\out}(v)|=1$. With the remaining probability we have $|L_{\out}'(v)|=2$.
    \item   For all $\chi\sim L_{\out}$, we have $\chi'\coloneqq \BLSDecode[\alpha_{\out},\chi] \sim L'_{\out}$ (with probability $1$).
    \item  For all $\chi\sim L_{\out}$, the coloring $\chi'$ has the same distribution as $\gd(\chi,v)$.
    \item Except for copying of the list $L'_\init$ the expected running time of \BLSGen{} is $O(\Delta(\log k + \log n))$. The time needed to update a $k$-coloring $\chi$ using \BLSDecode{} is $O(\Delta(\log n +\log k))$.
\end{enumerate}
\end{lemma}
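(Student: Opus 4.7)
The proof will parallel that of \cref{lm:NewWarmUpSampler} once the two procedures \BLSGen{} and \BLSDecode{} are specified. The guiding principle is the same: \BLSGen{} produces a random tuple $\alpha_{\out}$ that commits to a very small list of candidate new colors at $v$, and \BLSDecode{} inspects the actual coloring $\chi$ and selects the new color $\chi'(v)$ from among those candidates. The new feature, compared to \cref{lm:NewWarmUpSampler}, is that the precondition $|S_L(v)|<k-\Delta$ guarantees that the ``safe'' pool $[k]\setminus S_L(v)$ has size strictly greater than $\Delta$, which is precisely what will allow \BLSGen{} to compress $L'_{\out}(v)$ all the way down to at most two colors.

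I would design \BLSGen[\alpha_{\init},v]{} as follows: sample a threshold $\tau_{\out}\in[0,1]$, a primary color $c_1$ uniformly from a pool of size $k-\Delta$ (namely $[k]\setminus Q^{*}$, where $Q^{*}\supseteq Q_L(v)$ has size $\Delta$ and is chosen so that $Q^{*}\setminus Q_L(v)\subseteq [k]\setminus S_L(v)$; such a $Q^{*}$ exists because $|S_L(v)|<k-\Delta$), and a backup $c_2$ uniformly from $[k]\setminus S_L(v)$. Set $L'_{\out}(v)=\{c_1\}$ whenever $c_1\notin S_L(v)$ (so $c_1$ is guaranteed not to clash with any neighbor of $v$, regardless of the precise $\chi$), and $L'_{\out}(v)=\{c_1,c_2\}$ otherwise; elsewhere $L'_{\out}$ agrees with $L'_{\init}$. \BLSDecode[\alpha,\chi]{} then reads the actual $\chi(N(v))$ and chooses $\chi'(v)$ by an acceptance-rejection step on $\tau$: roughly, it keeps $c_1$ when $c_1\notin\chi(N(v))$ and $\tau$ lies above a threshold depending on $|\chi(N(v))|$, and otherwise falls back to $c_2$ (which is always outside $\chi(N(v))$).

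Parts (a) and (b) then follow by construction: only the list at $v$ is touched, and $|L'_{\out}(v)|=1$ exactly when $c_1\notin S_L(v)\setminus Q_L(v)$, which, since $c_1$ is uniform on a size-$(k-\Delta)$ set containing all of $S_L(v)\setminus Q_L(v)$ and no other part of $S_L(v)$, has probability exactly $p_L=1-\frac{|S_L(v)|-|Q_L(v)|}{k-\Delta}$; and \BLSDecode{} always outputs a color lying in $L'_{\out}(v)$. Part (c) is the heart of the proof and the step I expect to be the chief obstacle. To prove it I would fix $\chi\sim L_{\out}$, use the inclusions $Q_L(v)\subseteq \chi(N(v))\subseteq S_L(v)$, and compute $\Pr[\chi'(v)=c]$ for every $c\in [k]\setminus\chi(N(v))$ by a case analysis on which of the regions $[k]\setminus S_L(v)$ or $S_L(v)\setminus \chi(N(v))$ the color $c$ lies in (and, inside the first region, whether or not $c\in Q^{*}$), combining (i) the chance that $c$ was drawn as $c_1$ or $c_2$ with (ii) the chance of surviving the $\tau$-threshold test. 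The threshold in \BLSDecode{} must be calibrated so that all contributions sum to $1/(k-|\chi(N(v))|)$; getting the cases to balance out is the main algebraic check.

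Finally, part (d) is routine. \BLSGen{} scans the at most $\Delta$ neighbors of $v$ to assemble $S_L(v)$ and $Q_L(v)$ (each lookup costing $O(\log n)$), builds $Q^{*}$, and samples $\tau_{\out},c_1,c_2$ with $O(\log k)$ bits each, for total expected time $O(\Delta(\log k+\log n))$. \BLSDecode{} does analogous work to determine $\chi(N(v))$, test $c_1$ for membership, and evaluate the threshold, again giving $O(\Delta(\log k+\log n))$.
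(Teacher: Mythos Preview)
Your high-level plan mirrors the paper's, but the concrete design you propose for \BLSGen{} and \BLSDecode{} cannot satisfy parts~(b) and~(c) simultaneously, and the obstruction is structural rather than a matter of tuning the threshold. With your choices ($c_1$ uniform on $[k]\setminus Q^{*}$, $c_2$ uniform on $[k]\setminus S_L(v)$), the target set $[k]\setminus\chi(N(v))$ splits into three classes: colours in $S_L(v)\setminus\chi(N(v))$, reachable only as $c_1$; colours in $Q^{*}\setminus Q_L(v)$, reachable only as $c_2$; and colours in $[k]\setminus(Q^{*}\cup S_L(v))$, reachable as either. Part~(b) forces \BLSDecode{} to output $c_1$ whenever $c_1\notin S_L(v)$ (since then $L'_{\out}(v)=\{c_1\}$), so every colour $c$ in the doubly-covered class already has $\Pr[\chi'(v)=c]\ge\Pr[c_1=c]=\tfrac{1}{k-\Delta}$, whereas a colour in $S_L(v)\setminus\chi(N(v))$ has $\Pr[\chi'(v)=c]\le\tfrac{1}{k-\Delta}$. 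Equality would require the $c_2$-fallback never to fire, but the fallback occurs with probability at least $\Pr[c_1\in\chi(N(v))]=\tfrac{|\chi(N(v))|-|Q_L(v)|}{k-\Delta}$, which is positive for any $\chi\sim L$ with $\chi(N(v))\supsetneq Q_L(v)$; such $\chi$ exist whenever some neighbour of $v$ has a non-singleton list. So the three classes cannot all receive the common mass $\tfrac{1}{k-|\chi(N(v))|}$.

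The paper avoids this by reversing the roles: $c_1$ is drawn uniformly from $[k]\setminus S_L(v)$ (always valid for $v$) and $c_2$ uniformly from $S_L(v)\setminus Q_L(v)$ (the genuinely ambiguous zone). Then every colour of $[k]\setminus\chi(N(v))$ is reachable via \emph{exactly one} of $c_1,c_2$, and a single comparison of $\tau$ against $p_\chi=1-\tfrac{|S_L(v)|-|Q_L(v)|}{k-|\chi(N(v))|}$ balances the two pieces. Crucially, both procedures key off the same $\tau$: \BLSGen{} sets $L'_{\out}(v)=\{c_1\}$ iff $\tau\le p_L$, and \BLSDecode{} outputs $c_1$ whenever $\tau\le p_\chi$ (or $c_2\in\chi(N(v))$); since $|\chi(N(v))|\le\Delta$ gives $p_L\le p_\chi$, the singleton case is automatically consistent with part~(b). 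In your scheme the singleton event ($c_1\notin S_L(v)$) and the decode decision (a $\tau$-threshold) are independent, which is a separate reason part~(b) fails as stated.
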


\begin{algorithm}
\caption{\textsc{\BLSname}: generation and decoding}\label{alg:boundedlistsampler}
\setstretch{1.17}
\SetKwInOut{KwIn}{Input}
\SetKwInOut{KwOut}{Output}
\SetKwFunction{gen}{gen}
\SetKwFunction{decode}{decode}
\SetKwProg{myproc}{Function}{}{}
\myproc{\gen{}\rm{:}}{
 \KwIn{$\alpha_{\init}=(v_{\init},\tau_{\init},L_{\init},L'_{\init}, M_{\init})$, $v\in V$ with $|S_{L_\init}(v)|<k-\Delta$}
\KwOut{$\alpha_{\out}=(v_{\out},\tau_{\out},L_{\out},L'_{\out},M_{\out})$}
$L\gets L'_{\init}$; ~$\tau_{\out} \xleftarrow{R}[0,1]$\;\label{L-init}

$c_1\xleftarrow{R}[k]\setminus S_L(v)$;~ $c_2\xleftarrow{R} S_L(v)\setminus Q_L(v)$ \; $p_{L}\gets 1-\big(|S_L(v)|-|Q_L(v)|\big)~/~\big(k-\Delta\big)$ \label{prob:p_L} \tcp*{ $\abs{S_L(v)}<k-\Delta$ ensures $p_L\in[0,1]$}
\eIf{$\tau\leq p_{L}$\label{loop:BLSstart}}
   {$L'_{\out}(v)\gets \brak{c_1}$;~
   $M_{\out}\gets (c_1)$;
   }
   {
   $L'_{\out}(v)\gets \brak{c_1,c_2}$;~
   $M_{\out}\gets (c_1,c_2)$;
   }\label{loop:BLSend}
\KwRet{$\alpha_{\out}=(v,\tau_{\out},L'_{{\init}},L'_{\out},M_{\out})$}
}
\myproc{\decode{}\rm{:}}{
\KwIn{$\alpha=(v,\tau,L,L',M)$, $\chi\sim L'$ with $\abs{S_L(v)}<k-\Delta$, \\ $M[1]\notin S_L(v)$ and $M[2]\in S_L(v)$ or $M[2]=\emptyset$ }
\KwOut{$\chi' \sim L'$}
$\chi' \gets \chi$\;
$p_{\chi}\gets 1-\frac{\abs{S_L(v)}-\abs{Q_L(v)}}{k-\abs{\chi(N(v))}}$  \tcp*{$\abs{S_L(v)}<k-\Delta\leq k-\abs{\chi(N(v))}$ ensures $p_\chi \in[0,1]$}\label{prob:p_chi}
\label{loop:contractloopstart}
\uIf{$\tau \leq p_{\chi}$ or $M[2]\in \chi(N(v))$ \label{loop:DecodeBLSend}}
  {$\chi'_{\out}(v)\gets M[1]$\;} 

\Else{
   $\chi'_{\out}(v)\gets M[2]$\;
    } \label{loop:contractloopend}
\KwRet{$\chi'$} }


     
\end{algorithm}

We now describe the ideas behind \BLSGen{} and \BLSDecode{}. Consider   $\alpha_{\init}=(v_{\init},\tau_\init,L_\init,L'_\init, M_\init)$ and let $L=L'_{\init}$. Also consider a vertex $v\in V$ and a coloring $\chi\sim L$. We wish to produce $\alpha_{\out}=(v_{\out},\tau_{\out},L_{\out}, L'_{\out},M_{\out})$ with $L'_{\out}(v)$ of size at most $2$ (with some probability of it being of size $1$) and a coloring $\chi'$ such that $\chi'\sim L'_{\out}$, and  $\chi'$ should be distributed according to \gd{$(\chi,v)$}. 
For now let us focus on producing $L'_{\out}(v)$ of size $2$.
Hence, without knowing $\chi(N(v))\coloneqq \chi(N(v))$ we need to produce two colors $L'_{\out}(v)=\set{c_1,c_2}$ such that by choosing one of them (based on the
coloring $\chi$), we may ensure that $\chi'(v)$ is distributed uniformly over $[k]\setminus \chi(N(v))$. Notice that since $\chi\sim L$ we have $Q_L(v)\subseteq \chi(N(v))\subseteq S_L(v)$.
An initial attempt is to sample a color $c_1\notin S_{L}(v)$ uar and insist that the update operation set $\chi'(v)=c_1$ (no matter what $\chi$ is). While this is a valid choice of color at $v$ it is not necessarily distributed uniformly over $[k]\setminus \chi(N(v))$, because such an update places no mass on colors in $S_L(v)\setminus \chi(N(v))$. To remedy this situation we sample another color $c_2$ from $S_L(v)\setminus Q_L(v)$ uar and allow the update to choose between $c_1$ and $c_2$ depending on $\chi$. In particular, we prescribe the update at $v$ as follows. Let $\tau_\out$ be chosen from $[0,1]$ uar and let $p_\chi$ be a threshold in $[0,1]$: if $\tau_\out \leq p_\chi$ or $c_2\in \chi(N(v))$ then $\chi'(v)=c_1$; else $\chi'(v)=c_2$. 
Now, it is a matter of calculation to arrange for an appropriate value of $p_\chi$ such that $\chi'(v)$ is uniform over $[k]\setminus \chi(N(v))$. A direct calculation (see proof of \cref{lm:BoundedListSampler}) shows that $p_\chi=1-\frac{|S_L(v)|-|Q_L(v)|}{k-|\chi(N(v))|}$. 
To ensure that with significant probability $L'_{\out}$ has size $1$, we provide a threshold $p_L$ such that always $p_L\leq p_\chi$. Thus, whenever $\tau_\out\leq p_L$ we set $L'_{\out}=\set{c_1}$.
We let $p_L\coloneqq 1-\frac{|S_L(v)|-|Q_L(v)|}{k-|\Delta|}$. 
The assumption $|S_L(v)|<k-\Delta$ implies that $p_\chi \geq p_L >0$. Note that the threshold $p_{\chi}$ is computed after the actual coloring $\chi$ is available for update. \cref{alg:boundedlistsampler} (which may be skipped) is the code implementing the above ideas along with a proof of \cref{lm:BoundedListSampler}.

\begin{proof}[Proof of \cref{lm:BoundedListSampler}]
Let $L=L'_{\init}$. Note that $Q_L(v)\subseteq \\  \chi(N(v))  \subseteq S_L(v)$ as $\chi\sim L$.
Also, $p_L$ at \cref{prob:p_L} is at most $p_\chi$ at \cref{prob:p_chi} in \cref{alg:boundedlistsampler}.
To prove part $(a)$, observe that line \ref{L-init} directly implies $L_{\out}=L$. It is also evident that throughout the execution of the algorithm, $L'_{\out}(u)$ is never updated, for all $u\neq v$, after the execution of line \ref{L-init}. This proves that $L'_{\out}(u)=L(u)$ for all $u\neq v$. Finally, we note that $\tau_{\out}$ is distributed uniformly at random in $[0,1]$ and  $L'_{\out}(v)$ is singleton iff $\tau_{\out}\leq p_{L}$.
Thus $L'_{\out}(v)$ is a singleton with probability exactly $p_L$. 
The claim follows by noting the value of $p_L$ in line \ref{prob:p_L}. 

To prove part $(b)$, note from \BLSGen{} that the set $M_{\out}$ (disregarding the ordering) is actually the same as the set $L'_{\out}(v)$ (Line \ref{loop:BLSstart}-\ref{loop:BLSend}). Also, we see from \BLSDecode{} that $\chi'_{\out}(v)$ is always contained within $M_{\out}$. For all $w\neq v$,  \BLSDecode{} sets $\chi'_{\out}(w)$ to $\chi(w)$. By the hypothesis of the Lemma, $\forall w\in V$, $\chi(w)\in L_{\out}(w)$. Finally, since \BLSGen{} sets $L'_{\out}(w)$ to $L(w)$ for all $w\neq v$, we conclude that the part (b) of the claim is true.

To prove part $(c)$, we first note that the random process \gd{($\chi,v$)} recolors the vertex $v$ with a color chosen uar from the set $[k]\setminus \chi(N(v))$, while retaining the color of every other vertex.
Observe that:
\begin{itemize}
\item[(i)] The distribution of $\chi'(v)$ (induced by Algorithm \ref{alg:boundedlistsampler}), is a convex combination of two uniform distributions : one supported on the set $[k]\setminus S_{L}(v)$ (when $\chi'(v)=c_1$) and the other supported on the set $S_{L}(v)\setminus \chi(N(v))$ (when $\chi'=c_2$). 

\item[(ii)]
$\chi'(v)=c_1$ iff either $\tau\leq p_\chi$ or $c_2\in \chi(N(v))$. Hence,
\[
\Pr[\chi'(v)=c_1]=p_{\chi}+(1-p_{\chi})\cdot\Big( \frac{\abs{\chi(N(v))}-\abs{Q_{L}(v)}}{\abs{S_{L}(v)}-\abs{Q_{L(v)}}}\Big).
\]
\end{itemize}

Observation (i) implies that if $\Pr[\chi'(v)=c_1]$ turns out to be of the form $\frac{k-\abs{S_{L}(v)}}{k-\abs{\chi(N(v))}}$, it would imply that the distribution of $\chi'(v)$ is indeed uniform on the set $[k]\setminus \chi(N(v))$. Referring to \BLSDecode{} and solving for $\Pr[\chi'(v)=c_1]$ by substituting $p_{\chi}$ in observation [b], we verify that this is indeed true. This proves part $(c)$. 


To prove part $(d)$, notice that in \BLSGen{}, the only operations with non-trivial running time consist of :
\begin{itemize}
    \item Pick $\tau_\out$ uniformly from $[0,1]$ to compare with $p_L$ which is a fraction whose denominator may be represented with at most $O(\log k)$ bits
    \item Pick $c_1$ uniformly from $[k]\setminus S_L(v)$
    \item Pick $c_2$ uniformly from $S_L(v)\setminus Q_L(v)$
    \item Updating the list $L'_\out(v)$ and $M_\out$.
\end{itemize}

With access to fair coins the expected running time of \BLSGen{} is $O(\Delta(\log k+\log n))$.
For the running time of \BLSDecode{}, recall that we only update the color at the vertex $v$: the checking and update together take time $O(\Delta(\log k+\log n))$ which concludes the proof.
\end{proof}

\subsection{Collapsing phase}\label{collapsephase}

The collapsing phase will run for $T-T'$ steps from time $t=-T$ to $t=-T'$. The goal of this phase is to bring the list size at every vertex to at most $2$. During the collapsing phase we will generate a sequence $(\alpha_{-T},\ldots,\alpha_{-T'-1})$. Once the corresponding updates are applied, the list sizes of all the vertices will be brought down to at most two. As mentioned in the introduction, this reduction in list size will be achieved by updates of type {contract}; each such update will be preceded by a sequence of updates that spruce up the neighbourhood of the vertex whose list we wish to contract. To spruce up the neighborhood (recall that this happens when the union of the lists of the neighbors has size less than $k-\Delta$ ), we will repeatedly use compression; recall that the \textsc{compress} primitive accepts a set of colors $A$ of size $\Delta$ and ensures that, after the update is applied, the list of the updated vertex has at most one color outside $A$. While doing so, we must ensure that the lists of vertices that have already been collapsed are not disturbed. Let $V=\brak{v_1,\ldots,v_n}$, $N_{>}(v_i)\coloneqq \set{v_j\in N(v)\mid j>i}$ and $N_{<}(v_i)\coloneqq \set{v_j\in N(v)\mid j<i}$. To spruce up the  neighborhood of $v_i$, we will compress the lists of vertices in $N_{>}(v_i)$ and not in $N_{<}(v_i)$. Yet we need to ensure that the entire neighborhood, i.e., $N_{<}(v_i)\cup N_{>}(v_i)$ is spruced up; so the set for sprucing up the neighborhood of $v_i$, $A$ will be chosen such that it includes at least one element from the list of each $w \in N_{<}(v_i)$. The following code implements this.
\begin{algorithm}
\setstretch{1.36}
\SetKwInOut{KwIn}{Input}
\SetKwInOut{KwOut}{Output}
\KwIn{$\alpha_{\init}=(v_{\init},\tau_{\init},L_{\init},L'_{\init},M_{\init})$, $i\in [n]$}
\textbf{Promise:} for all $j<i$: $|L'_{\init}(v_j)|\leq 2$\\
\KwOut{$\alpha[-1,-|N_{>}(v_i)|]$}
$t\gets -|N_{>}(v_i)|$; $L\gets L'_{\init}$ \;
Pick a $\Delta$-element subset $A$ of $[k]$ that intersects every set in 
$\{L(w): w\in N_{<}(v_i) \}$\;  \label{line:maximallyintsubset}
\For{$w\in N_{>}(v_i)$\label{loopwarmuphelperstart}}
    {
        $\alpha_{t}\gets \WarmupGen{[\alpha_{t-1},w,A]}$\;
        $t\gets t+1$\;
    }\label{loopwarmuphelperend}

\KwRet{$\alpha[-1,-|N_{>}(v_i)|]$}
\caption{\warmuphelper{}} \label{alg:warmuphelper}
\end{algorithm}

\begin{lemma}
\label{lm:warmuphelper}
Let $\alpha[-1,-|N_{>}(v_i)|]$ be the output of the algorithm \warmuphelper{$[\alpha_{\init},i]$} and let $\alpha[{-1}]=(v,\tau,L',L'',M)$. Then, $(a)$ for $w\not\in N_{>}(v_i)$: $L''(w)= L'(w)$; $(b)$
$\bigcup_{w \in N(v_i)} |L(w)|\leq 2\Delta$.
\end{lemma}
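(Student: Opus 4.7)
The plan is to combine the single-vertex guarantee of \cref{lm:NewWarmUpSampler}(a) with the way the set $A$ is constructed in \cref{line:maximallyintsubset} of \cref{alg:warmuphelper}. First I would set up an invariant, proved by induction on the for loop at \cref{loopwarmuphelperstart}--\cref{loopwarmuphelperend}: throughout the procedure, only the lists of vertices in $N_{>}(v_i)$ are ever modified. The inductive step is exactly \cref{lm:NewWarmUpSampler}(a): each call $\WarmupGen[\alpha_{t-1},w,A]$ is at some $w \in N_{>}(v_i)$ and leaves $L'_{\out}(u) = L'_{\init}(u)$ for every $u \neq w$, so lists at vertices outside $N_{>}(v_i)$ are never touched and retain their values from the input $L'_{\init}$.

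Part (a) then falls out immediately by applying \cref{lm:NewWarmUpSampler}(a) to the final tuple $\alpha[-1]$: its update is at some $w \in N_{>}(v_i)$, so $L''(u) = L'(u)$ for every $u \neq w$, and in particular for every $u \notin N_{>}(v_i)$.

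For part (b), I would write $N(v_i) = N_{<}(v_i) \cup N_{>}(v_i)$ and bound $|L(w) \setminus A|$ separately on each side, where $L$ denotes the final bounding list. For $w \in N_{<}(v_i)$, the invariant gives $L(w) = L'_{\init}(w)$; the promise forces $|L(w)| \le 2$, and the choice of $A$ in \cref{line:maximallyintsubset} forces $L(w) \cap A \neq \emptyset$, so $|L(w) \setminus A| \le 1$. For $w \in N_{>}(v_i)$, exactly one compress update targets $w$ and, by \cref{lm:NewWarmUpSampler}(a), produces $L(w) = A \cup \{c_w\}$ for some color $c_w$; no later update modifies $L(w)$, so again $|L(w) \setminus A| \le 1$. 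Summing,
\[
\Big|\bigcup_{w \in N(v_i)} L(w)\Big| \;\le\; |A| \;+\; \sum_{w \in N(v_i)} |L(w) \setminus A| \;\le\; \Delta + |N(v_i)| \;\le\; 2\Delta,
\]
which is the desired bound.

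I do not foresee a serious obstacle; the only auxiliary check is that a $\Delta$-element set $A \subseteq [k]$ meeting every list in $\{L'_{\init}(w) : w \in N_{<}(v_i)\}$ actually exists, which is immediate because $|N_{<}(v_i)| \le \Delta$ and each such list is non-empty, so picking one color from each and padding to size $\Delta$ with arbitrary colors from $[k]$ (ample since $k > 3\Delta$) suffices. The only bookkeeping point worth flagging is the uniform ``at most one color outside $A$'' accounting on both sides---from the size-$2$ promise for already-collapsed neighbors and from the structural form $A \cup \{c_w\}$ for newly-compressed ones---which is precisely what keeps the final bound at $2\Delta$ rather than growing with the neighborhood.
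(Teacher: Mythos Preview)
Your proposal is correct and follows essentially the same approach as the paper, just spelled out in considerably more detail. The paper's own proof is a two-sentence version of exactly your argument: for (a), only lists of vertices in $N_{>}(v_i)$ are ever perturbed; for (b), each of the at most $\Delta$ neighbours of $v_i$ contributes at most one colour outside $A$, whence $|A|+\Delta\le 2\Delta$. Your explicit split $N(v_i)=N_{<}(v_i)\cup N_{>}(v_i)$ and the separate accounting on each side (via the promise and the choice of $A$ for $N_{<}$, via \cref{lm:NewWarmUpSampler}(a) for $N_{>}$) are precisely the details the paper leaves implicit.
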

\begin{proof}
Part $(a)$ is true because the list of no vertex outside $\set{v_i}\cup N_{>}(v_i)$ is perturbed by the algorithm.
Part $(b)$ follows as the at most $\Delta$ neighbors of $v_i$ can each contribute to the union at most one color outside the set $A$ .
\end{proof}

\newcommand{\last}{\textup{\textsc{last}}}
By successively sprucing up the neighborhood and contracting the lists of all vertices in $V$, we complete the collapsing phase. The following code implements this formally; here we adopt the notation, that if $\alpha$ is a sequence of update tuples, then $\alpha[\last]$ is the tuple in this list corresponding to the latest update.
 
\begin{algorithm}
\setstretch{1.36}
\SetKwInOut{KwIn}{Input}
\SetKwInOut{KwOut}{Output}
\KwOut{$\alpha[-1,-(T-T')]$}
$\alpha_{\last} \leftarrow (v_1,0,[k]^V,[k]^V,())$ \;
$\alpha \leftarrow \text{empty}$ \;
\For{$i=1,2,\ldots,n$\label{looplastwarmloopstart}}{
$\alpha \gets    \warmuphelper{[\alpha_\last,i]} \circ \alpha$\;
$\alpha_\last \gets \BLSGen{[\alpha[\last],v_i]}$\;
$\alpha \gets   \alpha_\last \circ \alpha$\;
}\label{looplastwarmloopend} 
\KwRet{$\alpha$}
\caption{\textsc{collapse}} \label{alg:warmup}
\end{algorithm}
\begin{lemma}\label{lm:warmup}
The collapsing phase lasts for $T-T'=|E(G)|+n$ steps.
Let $\alpha[-1,-(T-T')]$ be the output of \textup{\warmup[\empty]} and let $\alpha_{-1}=(v,\tau,L',L'',M)$.
Then, for all $w\in V$ we have $|L''(w)|\leq 2$.
\end{lemma}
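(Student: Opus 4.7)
The plan is to establish both parts of the lemma via an induction on the outer-loop index $i$ in \cref{alg:warmup}. For the step count, each iteration $i$ contributes $|N_{>}(v_i)|$ update tuples from \warmuphelper{} plus one tuple from \BLSGen{}, so the total number of updates produced is
\[ \sum_{i=1}^{n} \bigl(|N_{>}(v_i)| + 1\bigr) = |E(G)| + n, \]
because $\sum_{i=1}^{n} |N_{>}(v_i)|$ counts each edge exactly once (an edge $\{v_j, v_k\}$ with $j < k$ appears only in the $i=j$ term). This matches $T-T'$ and establishes the step count.

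For the list-size claim, I would prove the following invariant by induction on $i$: after the $i$-th iteration of the outer loop in \cref{alg:warmup}, the current bounding list $L$ satisfies $|L(v_j)| \leq 2$ for every $j \leq i$. The base case $i=0$ is vacuous. For the inductive step, the promise required by \warmuphelper{$[\alpha_{\last},i]$} --- namely $|L'_{\init}(v_j)| \leq 2$ for all $j < i$ --- is supplied by the IH, since the $L'_{\init}$ embedded in the incoming $\alpha_{\last}$ coincides with the list state at the end of iteration $i-1$. By \cref{lm:warmuphelper}(a), vertices outside $N_{>}(v_i)$ have their lists untouched during the call, so $|L(v_j)| \leq 2$ continues to hold for $j < i$. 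By \cref{lm:warmuphelper}(b), after \warmuphelper{} we have $|S_L(v_i)| \leq 2\Delta$, and since $k > 3\Delta$ this gives $|S_L(v_i)| < k - \Delta$, so the precondition of \BLSGen{} at $v_i$ is met. \cref{lm:BoundedListSampler}(a) then ensures that the resulting \BLSGen{} tuple leaves lists at vertices other than $v_i$ unchanged while producing $|L(v_i)| \leq 2$, completing the inductive step. After $n$ iterations, every $v \in V$ satisfies $|L''(v)| \leq 2$.

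One auxiliary point I would explicitly verify is that the $\Delta$-element set $A$ in \cref{line:maximallyintsubset} of \warmuphelper{} can always be chosen: since $|N_{<}(v_i)| \leq \Delta$ and every list is non-empty, one can pick a color from each such list to obtain at most $\Delta$ distinct colors intersecting every $L(w)$ with $w \in N_{<}(v_i)$, and then pad to exactly $\Delta$ elements (feasible since $k > 3\Delta$). The main burden of the proof is essentially bookkeeping --- tracking which lists each primitive invocation perturbs and verifying the $|S_L(v_i)| < k - \Delta$ precondition at the correct moment --- and both tasks are dispatched directly by \cref{lm:NewWarmUpSampler,lm:BoundedListSampler,lm:warmuphelper} combined with the hypothesis $k > 3\Delta$, so I do not anticipate a substantive obstacle beyond this bookkeeping.
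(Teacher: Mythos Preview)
Your proposal is correct and follows the same approach as the paper: the step-count argument is identical (each edge counted once via $\sum_i |N_{>}(v_i)|$), and your inductive verification of the list-size bound is simply a fleshed-out version of what the paper dismisses as ``obvious based on previous lemmas.'' The auxiliary check that the set $A$ can always be chosen is a nice addition the paper omits.
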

\begin{proof}
 When $v_i$ is chosen for update, only $v_i$ and those of its neighbors which succeed it in the ordering are updated. Hence, the edges joining $v$ to these neighbors are counted only this one time with the update. Thus, we set $T- T'=|E(G)|+n$, where $E(G)$ is the edge set of the graph $G$. The remaining part of the claim is obvious based on previous lemmas.
\end{proof}

\subsection{Coalescence phase}
\label{subsec:2.3}
The collapsing phase produces a random sequence of updates, say $\alpha$, at the end of which the lists of all vertices have size at most two. We now propose to follow this up by a another sequence $(\beta_{-1},\ldots,\beta_{-T'})$ and ensure that $|L_0(v)|=1$ for all $v\in V$, with probability at least $1/2$. As stated in the introduction, this is achieved by applying contracting updates $T'$ times at vertices chosen uniformly at randomly. More precisely, let $w[-1,-T']=(w_{-T'}, w_{-T'+1}, \ldots, w_{-1})$ be chosen uniformly from $V^{T'}$: the random sequence $\beta[-1,-T']$ is obtained using the random process $\beta[-T']\leftarrow \BLSGen{[\alpha[\last],w_{-T'}]}$, and $\beta[-i+1]\leftarrow \BLSGen{[\beta[-i], w_{i+1}]}$, for $i=-T',-T'+1,\ldots, -2$. Note that after the collapsing phase all the neighborhoods are spruced up (since each list is of size at most $2$), and thus further application of the contract updates leaves all the neighborhoods spruced up, which is the case for the entirety of the coalescence phase.

\begin{algorithm}
\setstretch{1.36}
\SetKwInOut{KwIn}{Input}
\SetKwInOut{KwOut}{Output}
\KwIn{$\alpha_{in}=(v_{\init},\tau_{\init},L_{\init},L'_{\init},M_{\init})$}
\textbf{Promise:} for all $v\in V$: $|L'_{\init}(v)|\leq 2$\\
\KwOut{$\beta[-1,-T']$}
$\alpha_{-T'-1}\gets \alpha_{\init}$\;
\For{$t=-T',\ldots,-1$}{
$v \xleftarrow{R} V$\;
$\beta_t\gets \BLSGen{[\alpha_{t-1},v]}$ \label{line:CheckForBLSInCoals} 
}
\KwRet{$\beta[-1,-T']$}
\caption{\textsc{Coalescence}} \label{alg:coalescence}
\end{algorithm}
Recall that after the collapsing updates, the list sizes have a significant probability reducing to $1$ from $2$; this is progress.
However, it can also be the case that when an update is performed at a vertex with list size $1$, its list size become $2$. \\ \cref{lm:BoundedListSampler} shows that if the vertex has many neighbors with singleton lists, then it has a greater chance of acquiring a singleton list; in particular, if all its neighbors have list size $1$, then it definitely acquires a singleton list after the update. To track our progress, we define $W_t\coloneqq\brak{v~|~|L_t(v)|=1}$ (earlier we had defined $W_t$ to be the number of vertices of list size $1$). Then, $|W_t|$ performs a random walk on $[0,n]$. \cref{lm:coalescence} establishes that this walk has a drift towards $n$, and that this walk reaches the absorbing state $n$ with probability at least $1/2$.

\begin{lemma}\label{lm:coalescence}
Assume $k > 3\Delta$ and let $T'=2\frac{k-\Delta}{k-3\Delta}n\ln n$. Suppose the last update of the collapse phase has the form $\alpha[\last]=(v_n,\tau,L,L',M)$ such that $|L'(v)|\leq 2$, for all $v \in V$. Let $\beta[-1,-T']$ be the random sequence of updates generated by the above process for the coalescence phase, starting from $\alpha[\last]$. Suppose $\beta[-1]$ has the form $(.,.,.,L_0,.)$.
Then, with probability at least $1/2$, we have for all $v\in V:$ $|L_0(v)|=1$.
\end{lemma}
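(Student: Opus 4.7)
The plan is to track the number of non-singleton vertices during the coalescence phase and show, via a one-step supermartingale drift argument, that this count reaches zero after $T'$ updates with probability at least $1/2$.

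Let $L^{(s)}$ denote the bounding list obtained after $s$ coalescence updates have been applied, so $L^{(0)}=L'$ and $L^{(T')}=L_0$. Define the integer-valued random variable $Y_s := |\{v \in V : |L^{(s)}(v)|=2\}|$. Since $|L^{(0)}(v)| \leq 2$ for all $v$ by hypothesis and \BLSGen{} never produces a list of size more than $2$ (by \cref{lm:BoundedListSampler} (a)), the invariant $|L^{(s)}(v)| \in \{1,2\}$ holds for every $s \in \{0,1,\ldots,T'\}$. In particular, $|S_{L^{(s)}}(v)| \leq 2\Delta < k - \Delta$, so the precondition of \BLSGen{} is met throughout, and the goal reduces to $\Pr[Y_{T'}=0] \geq 1/2$.

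I would next bound the one-step drift. Fix $s \geq 1$ and condition on $L^{(s-1)}$. When the coalescence routine picks vertex $v$ uniformly at random, \cref{lm:BoundedListSampler} (a) states that $v$'s updated list is a singleton with probability exactly $p(v) := 1 - (|S_{L^{(s-1)}}(v)| - |Q_{L^{(s-1)}}(v)|)/(k-\Delta)$. Writing $d_2(v)$ for the number of neighbors of $v$ currently carrying a list of size $2$, every singleton neighbor of $v$ contributes its unique color to $Q_{L^{(s-1)}}(v)$ and hence contributes $0$ to $S_{L^{(s-1)}}(v) \setminus Q_{L^{(s-1)}}(v)$, while every non-singleton neighbor contributes at most $2$ colors to the same set. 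This gives the key inequality $|S_{L^{(s-1)}}(v)| - |Q_{L^{(s-1)}}(v)| \leq 2 d_2(v)$, so $p(v) \geq 1 - 2 d_2(v)/(k-\Delta)$. Combining the two cases (a singleton $v$ becomes non-singleton with probability $1-p(v)$; a non-singleton $v$ becomes singleton with probability $p(v)$), averaging over the uniform choice of $v$, and using the handshake identity $\sum_v d_2(v) = \sum_{u : |L^{(s-1)}(u)|=2} \deg(u) \leq \Delta \cdot Y_{s-1}$, yields $\E[Y_s \mid L^{(s-1)}] \leq Y_{s-1}\bigl(1 - (k-3\Delta)/(n(k-\Delta))\bigr)$.

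Iterating this inequality and using $1-x \leq e^{-x}$, together with the choice $T'=2n\ln n \cdot (k-\Delta)/(k-3\Delta)$, gives $\E[Y_{T'}] \leq n \cdot e^{-2\ln n} = 1/n$. Since $Y_{T'}$ is a non-negative integer, Markov's inequality delivers $\Pr[Y_{T'} \geq 1] \leq 1/n \leq 1/2$ (for $n \geq 2$), which is the desired conclusion. The main delicate step is the inequality $|S_L(v)| - |Q_L(v)| \leq 2 d_2(v)$: it is precisely the fact that non-singleton lists have size exactly $2$ (rather than the $\Delta+1$ bound underlying Huber's analysis) which converts the quadratic threshold $k > \Delta^2 + 2\Delta$ into the linear one $k > 3\Delta$.
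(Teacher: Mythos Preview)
Your proof is correct, and the one-step drift computation (the inequality $|S_L(v)|-|Q_L(v)|\le 2d_2(v)$ together with the handshake sum $\sum_v d_2(v)\le \Delta Y_{s-1}$) is exactly what the paper does; in the paper's notation this is \cref{claim:drift}.

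Where you diverge is in how you extract the conclusion from the drift. The paper tracks $X_t=n-Y_t$, reads the drift as an \emph{additive} lower bound $\E[X_{t+1}-X_t\mid X_t]\ge\kappa_{X_t}$ with $\kappa_i=\frac{n-i}{n}\cdot\frac{k-3\Delta}{k-\Delta}$, and then invokes Huber's random-walk hitting-time lemma (\cref{claim:mart}) to bound the expected time to absorption by $\sum_i 1/\kappa_i=\frac{n(k-\Delta)}{k-3\Delta}H_n$, followed by Markov on the hitting time. You instead rewrite the same drift as a \emph{multiplicative} contraction $\E[Y_s\mid L^{(s-1)}]\le Y_{s-1}\bigl(1-\frac{k-3\Delta}{n(k-\Delta)}\bigr)$, iterate via the tower property, and apply Markov directly to the integer $Y_{T'}$. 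Your route is more self-contained (it avoids the auxiliary \cref{claim:mart} entirely) and in fact yields the stronger bound $\Pr[Y_{T'}\ge 1]\le 1/n$ rather than just $1/2$. The paper's route, on the other hand, gives an explicit bound on the expected coalescence time.
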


To prove this we will require the following claim.

\begin{claim}[{\cite[Theorem~$4$]{Huber98}}]
\label{claim:mart} 
Suppose that $X_t$ is a random walk on $\brak{0,1,\ldots,n}$ where $0$ is a reflecting state and $n$ is an absorbing state. Further $|X_{t+1}-X_{t}|\leq 1$, and $\E[X_{t+1}-X_{t}~|~X_t=i]\geq \kappa_i>0$ for all $X_t<n$.
Let $e_i$ is the expected number of times the walk hits the state $i$. 
Then $$\sum\limits_{i=0}^{n}e_i\leq \sum\limits_{i=0}^{n}\frac{1}{\kappa_i}.$$
\end{claim}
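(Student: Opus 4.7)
The plan is to prove the bound via a standard Lyapunov/supermartingale argument applied to the walk $X_t$ stopped at the absorption time $\tau := \min\{t \ge 0 : X_t = n\}$. The key preliminary observation is that $\sum_{i=0}^{n-1} e_i = \E[\tau]$ and $e_n = 1$ (the walk visits $n$ exactly once at time $\tau$, since $n$ is absorbing), so it suffices to show $\E[\tau] \le \sum_{j=0}^{n-1} 1/\kappa_j$, and then absorb the $e_n=1$ contribution into the $1/\kappa_n$ slot on the right-hand side by convention.

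The natural Lyapunov function is
\[
\phi(i) \;=\; \sum_{j=i}^{n-1}\frac{1}{\kappa_j}, \qquad 0 \le i \le n,
\]
which is nonnegative, nonincreasing, has $\phi(n)=0$ and $\phi(0)=\sum_{j=0}^{n-1}1/\kappa_j$, and satisfies $\phi(i)-\phi(i+1)=1/\kappa_i$. The heart of the argument is the one-step decrease
\[
\E\bigl[\phi(X_{t+1})-\phi(X_t)\,\big|\,X_t=i\bigr] \;\le\; -1 \quad\text{for every } i < n.
\]
Writing $p_i := \Pr[X_{t+1}=i+1\mid X_t=i]$ and $q_i := \Pr[X_{t+1}=i-1\mid X_t=i]$, the hypothesis $|X_{t+1}-X_t|\le 1$ lets us expand the expected change to $-p_i/\kappa_i + q_i/\kappa_{i-1}$ (with the convention that the $q_0/\kappa_{-1}$ term is absent because reflection at $0$ forces $q_0=0$). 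The drift hypothesis $p_i-q_i\ge \kappa_i$ is then combined with the telescoping structure of $\phi$ to obtain the required $\le -1$ bound.

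Once the one-step drift inequality is in hand, the conclusion is standard: the stopped process $M_t := \phi(X_{t\wedge\tau}) + (t\wedge\tau)$ is a nonnegative supermartingale with $M_0=\phi(X_0)$ and $M_\tau = \tau$ (since $\phi(n)=0$). Optional stopping (justified by nonnegativity and monotone convergence) gives $\E[\tau]\le \phi(X_0)\le \phi(0) = \sum_{j=0}^{n-1}1/\kappa_j$. Combining with $e_n=1$ yields $\sum_{i=0}^{n}e_i \le \sum_{i=0}^{n}1/\kappa_i$.

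The main technical obstacle is the one-step drift calculation at interior states $0<i<n$, where the expected change in $\phi$ couples two consecutive $1/\kappa$ terms and so does not immediately reduce to $(p_i-q_i)/\kappa_i$; this step is where the drift lower bound together with the specific (decreasing, telescoping) shape of $\phi$ has to be used carefully. The reflecting boundary at $i=0$ is easy because the expression collapses to $-p_0/\kappa_0\le -1$ via $p_0\ge\kappa_0$, and the absorbing boundary at $i=n$ is vacuous because the walk is stopped there.
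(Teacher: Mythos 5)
The paper offers no proof of \cref{claim:mart}; it is imported verbatim from Huber, so your proposal has to stand on its own. Your overall architecture --- the potential $\phi(i)=\sum_{j=i}^{n-1}1/\kappa_j$, a one-step drift bound $\E[\phi(X_{t+1})-\phi(X_t)\mid X_t=i]\le -1$, and optional stopping for the supermartingale $\phi(X_{t\wedge\tau})+t\wedge\tau$ --- is the standard route and surely the intended one; the accounting $\sum_{i<n}e_i=\E[\tau]$, $e_n=1$, and the boundary case $i=0$ are fine. But the step you defer as ``the main technical obstacle'' is not a matter of care: it fails under the stated hypotheses. For $0<i<n$ the expected change of $\phi$ is $-p_i/\kappa_i+q_i/\kappa_{i-1}$, and $p_i-q_i\ge\kappa_i$ yields $\le -1$ only after replacing $\kappa_{i-1}$ by $\kappa_i$ in the second term, which requires $\kappa_{i-1}\ge\kappa_i$. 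Without that monotonicity the claim itself is false. Concretely, take $n=3$ with transitions $0\to 1$ w.p.\ $1$; from $1$, up w.p.\ $0.01$ and stay otherwise; from $2$, up w.p.\ $0.9$ and down w.p.\ $0.1$; so one may take $\kappa_0=1$, $\kappa_1=0.01$, $\kappa_2=0.8$. Starting at $1$, the expected absorption time is $112.2$ and $e_1=111.1$, while $\sum_{i=0}^{2}1/\kappa_i=102.25$ and $1/\kappa_1=100$; no admissible value of $\kappa_3\le 1$ repairs the inequality. (At state $2$ your drift quantity equals $+8.875$, not $\le -1$.)

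The statement, and your proof, therefore need the additional hypothesis that $\kappa_0\ge\kappa_1\ge\cdots\ge\kappa_{n-1}$. With it, your computation closes immediately, since $-p_i/\kappa_i+q_i/\kappa_{i-1}\le-(p_i-q_i)/\kappa_i\le-1$, and the rest of your argument is correct. This costs the paper nothing: in \cref{lm:coalescence} one has $\kappa_i=\frac{n-i}{n}\bigl(1-\frac{2\Delta}{k-\Delta}\bigr)$, which is decreasing in $i$. But as a freestanding proof of the claim as written, the proposal has a genuine gap --- indeed an unfixable one --- exactly at the step it promises to handle ``carefully.''
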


\begin{proof}[Proof of \cref{lm:coalescence}]


For $t=-T',\ldots,-1,0$, let $W_t\coloneqq\brak{v:|L_t(v)|=1}$, let $X_t=|W_t|$ and $\delta_t=X_{t+1}-X_t$. Note that $X_t$ is a random variable based on the random choice of $\beta$. We will use \cref{lm:BoundedListSampler} to establish \begin{equation}
\E[X_{t+1}-X_t\mid X_t] \geq \frac{n-X_t}{n}\left(1-\frac{2\Delta}{k-\Delta}\right).
\label{claim:drift}
\end{equation}
Note that the drift is positive if $2\Delta < k-\Delta$, that is, $k > 3\Delta$. Then, our lemma follows immediately from \cref{claim:drift}, \cref{claim:mart} and Markov's inequality.

It remains to establish \cref{claim:drift}. Let $L$ be
the lists at time $t$. We have the following.
\begin{description}
\item[$X_{t+1}-X_t=1$] iff $w_t$ (the vertex updated in step $t$) has list size $2$ and then its list size becomes $1$ after the update. By \cref{lm:BoundedListSampler}, the last event happens with probability $(1-(|S_L(w)|-|Q_L(w)|)/(k-\Delta))$.

\item[$X_{t+1}-X_t=-1$] iff $w_t$ has list size $1$ and then its list size becomes $2$ after the update. By \cref{lm:BoundedListSampler}, this happens with probability $(|S_L(w)|-|Q_L(w)|)/(k-\Delta)$.
\end{description}
Note that $|S_L(v)|-|Q_L(v)| \leq 2|N(v) \cap \overline{W}_t|$, so
\begin{equation}
\sum_v  |S_L(v)|-|Q_L(v)| \leq 2|\overline{W}_t|\Delta.
\end{equation}
Thus,
\begin{align*}
&\E[X_{t+1}-X_{t}\mid W_t] \\ &=\frac{1}{n}\left[
        \sum_{v \not \in W_t}\left(1-\frac{|S_L(v)|-|Q_L(w)|}{k-\Delta}\right) 
            -\sum_{v \in W_t} \frac{|S_L(v)|-|Q_L(w)|}{k-\Delta}\right]\\
            & =\frac{1}{n}\left[|\overline{W}_t|- \sum_{v\in V} \frac{|S_L(v)|-|Q_L(w)|}{k-\Delta}\right]\\
            & \geq \frac{1}{n}\left[|\overline{W}_t|-  \frac{2|\overline{W}_t|\Delta}{k-\Delta}\right]\\
            & = \frac{n-X_t}{n} \left[1-\frac{2\Delta}{k-\Delta}\right].
\end{align*} 
The claim follows from this.

\end{proof}



\subsection{Proof of \cref{lm:main} and running time analysis of \cref{alg:1}}
\label{subsec:2.5}

Let $\alpha[-T'-1,-T]$ be the random sequence of update tuples of length $T-T'=|E(G)|+n$ produced in the collapse phase; let $\beta[-1,-T']$ be the random sequence of update tuples of length $T'$ produced in the coalescence phase. Our final sequence of tuples
will be $\alpha[-1,T]:=\beta[-1,-T'] \circ \alpha[-T'-1,-T]$, where $T:=|E(G)|+n+T'$.  Let $U[-1,-T]$ be the sequence of updates corresponding to $\alpha[-1,-T]$, obtained by applying the appropriate \emph{decode} procedure to each tuple in $\alpha[-1,-T]$. 
%
The predicate $\Phi$ outputs $\textsc{true}$ iff  $|L_0(v)|=1$.
Let us now justify the various parts of \cref{lm:main} in order.
\begin{enumerate}[label=(\alph*)]
    \item That a sample from $\mathcal{D}$  can be computed efficiently is clear from the fact all sub-routines used in the various algorithms are efficient. That each update instruction $U_i$ can also be computed efficiently is clear from \cref{alg:WarmUpSampler} and \cref{alg:boundedlistsampler}. Also, $\Phi$ is clearly efficiently computable justifying part $(a)$.
    \item Notice that the vertex $v_t$ (either random or fixed) we choose to update in $U_t$ is independent of the evolution up till time $t-1$. Further, at time $t$ whichever sub-routine is used, faithfully follows \gd{} at the vertex $v_t$. Hence, $U(-1,-T)$ takes uniform distributions to uniform distribution justifying part $(b)$.
    \item As mentioned above the predicate $\Phi$ outputs $\textsc{true}$ iff  $|L_0(v)|=1$. We start with $L_T=[k]^V$ and maintain that if at time $t$ there is a coloring $\chi\sim L_{t}$ then $U_t(\chi)\sim L_{t+1}$. This justifies part $(c)$.
    \item Part $(d)$ follows immediately from \cref{lm:coalescence}.
\end{enumerate}

Finally, we justify \cref{thm:main} by analyzing the expected running time of \cref{alg:1}.
Let $i$ be the first index where $\Phi(U[-iT-1, -(i+1)T])=\textup{\textsc{true}}$) and let $\chi = L'_{-iT-1}$ be the unique coloring in the image of $U(-iT-1,-(i+1)T)$. 
Notice that a particular block of updates $U[-jT-1,-(j+1)T]$ (where $j<i$) is processed twice by \cref{alg:1}: once during generation of $U[-jT-1,-(j+1)T]$ and once while computing $U(-1,-iT)(\chi)$.
For applying the function $U(-jT-1,-j(T+1))$, we need to invoke both $\WarmupDecode{}$ and $\BLSDecode{}$ which take as input a tuple $\alpha=(v,\tau,L,L',\\ M)$ and a coloring $\chi$. Notice that both these procedures actually never require the lists $L$ and $L'$. Hence, during the generation of $U[-1,-(i+1)T]$ which corresponds to generating the sequence $\alpha[-1,-(i+1)T]$, we implement the changes performed to the lists by $\WarmupGen{}$ and $\BLSGen{}$ in-place without creating new lists. Thus, during the execution of $\WarmupGen{}$ and $\BLSGen{}$ we skip the step of copying the lists.

To calculate the expected time needed to generate $U[-jT-1,-(j+1)T]$ we analyze the expected time needed to generate the updates corresponding to the two phases.
The collapse phase involves generating $|E(G)|+n$ updates during which we call \WarmupGen{} $|E(G)|$ times and \BLSGen{} $n$ times.  
Recall from part $(d)$s of \cref{lm:NewWarmUpSampler} and \cref{lm:BoundedListSampler} that the expected running times for both \WarmupGen{} and \BLSGen{} are $O(\Delta(\log k + \log n))$. Hence, the expected time needed for the collapse phase is $O(n\Delta^2(\log k+\log n))$.
The coalescence phase involves calling \BLSGen{} $2\frac{k-\Delta}{k-3\Delta}n\ln n$ times. Hence, the expected time needed for the coalescence phase is $O((n\log n)\Delta^2(\log k+\log n))$ and the overall expected time for generating $U[-jT-1,-(j+1)T]$ is $O((n\log n)\Delta^2(\log k+\log n))$.

To calculate the time needed to decode $U(-jT-1,-(j+1)T)$ observe that we need to call \WarmupDecode{} $|E(G)|$ times and \BLSDecode{} $n+2\frac{k-\Delta}{k-3\Delta}n\ln n$ times. 
Recall from part $(d)$s of \cref{lm:NewWarmUpSampler} and \cref{lm:BoundedListSampler} that the expected running times for both \WarmupDecode{}{} and \BLSDecode{}{} are $O(\Delta(\log\Delta\log k + \log n))$.
Hence, the running time of decoding $U(-jT-1,-(j+1)T)$ is $O((n\log n)\Delta^2(\log\Delta\log k +\log n))$.

By part $(d)$ of \cref{lm:main} on expectation the value of $j$ is $2$ the overall expected running time of \cref{alg:1} is $O((n\log^2 n)\cdot (\Delta^2\log\Delta\log k))$.

\newcommand{\kay}{k}
\section{Bottleneck for achieving {$\protect\kay>2\Delta$}}

As the coupling proofs for efficient approximate sampling of colorings work all the way to the bound of $k>2\Delta$, it seems natural to ask if we can obtain an efficient perfect sampler which works with $k>2\Delta$.
In this current framework we have two primitives namely, \warmup{} and \sampler{}, which are the workhorses of our algorithm for $k>3\Delta$. Now, suppose we shoot for a better bound and work with $k\leq 3\Delta$ (even $k=2\Delta+1$). In this case we face two hurdles. 

Firstly, after the application of compress updates to spruce up the neighborhood of a vertex $v$ we are able to guarantee that $|S_L(v)|\leq 2\Delta$: however, if $k\leq 3\Delta$ this is not enough to meet the input requirement  for \sampler{}, i.e., $|S_L(v)|<k-\Delta$. 

Secondly, even if we somehow manage to apply \sampler{} we may still produce lists of size $2$. Recall, that the drift analysis of $|W_t|$ (where $W_t$ is the number of vertices with list size $1$) requires an extra margin of $\Delta$, over the product of the maximum degree ($\Delta$) and the bound on the list sizes we can guarantee, in $k$. If $k\leq 3\Delta$ then we do not have this margin.

\section*{Acknowledgements}
We are grateful to Piyush Srivastava for introducing us to this problem and for the numerous detailed discussions which led to this work. We also thank him for pointing us to the relevant references.
Finally, we are indebted Prahladh Harsha and Jaikumar Radhakrishnan with whom we had many discussions that helped us organize our ideas and put them in a presentable form that is the current write-up. 

{
\bibliographystyle{prahladhurl}
\bibliography{main}
}

\end{document}